\newtheorem{proposition}{Proposition}
\newtheorem{lemma}{Lemma}
\newtheorem{theorem}{Theorem}
\newtheorem{corollary}{Corollary}
\newcommand{\REAL}{\ensuremath{\mathbb{R}}}
\newcommand{\COMPLEX}{\ensuremath{\mathbb{C}}}
\newcommand{\Exp}[1]{\mathbb{E}\left[#1\right]}
\newcommand{\Prob}[1]{\text{Prob}\left[#1\right]}
\title{Collaborative Broadcast in $\mathcal{O}(\log \log n)$ Rounds}
\date{August 30, 2019}	
\author{Christian Schindelhauer\\
 University of Freiburg,\\
Georges-Köhler-Allee 51,\\
79110 Freiburg im Breisgau, Germany\\
\texttt{schindel@tf.uni-freiburg.de}
\And
Aditya Oak\\
Technical University of Darmstadt, \\
Hochschulstra{\ss}e 10,\\
64289 Darmstadt, Germany\\
\texttt{oak@st.informatik.tu-darmstadt.de}
\And 
Thomas Janson\\
 University of Freiburg,\\
Georges-K{\"o}hler-Allee 51,\\
79110 Freiburg im Breisgau, Germany\\
\texttt{thomas@janson-online.de}
}
\begin{document}
\maketitle
\begin{abstract}
We consider the multihop broadcasting problem for $n$ nodes placed uniformly at random in a disk and investigate the number of hops required to transmit a signal from the central node to all other nodes under three communication models: Unit-Disk-Graph (UDG), Signal-to-Noise-Ratio (SNR), and the wave superposition model of multiple input/multiple output (MIMO).

In the MIMO model, informed nodes cooperate to produce a stronger superposed signal. We do not consider the problem of transmitting a full message nor do we consider interference with other messages. In each round, the informed senders try to deliver to other nodes the required signal strength such that the received signal can be distinguished from the noise.

We assume a sufficiently high node density $\rho= \Omega(\log n)$ in order to launch the broadcasting process. In the unit-disk graph model, broadcasting takes $\mathcal{O}(\sqrt{n/\rho})$ rounds. In the other models, we use an Expanding Disk Broadcasting Algorithm, where in a round only triggered nodes within a certain distance from the initiator node contribute to the broadcasting operation.

This algorithm achieves a broadcast in only $\mathcal{O}\left(\frac{\log n}{\log \rho}\right)$ rounds  in the SNR-model.
Adapted to the MIMO model, it  broadcasts within $\mathcal{O}(\log \log n - \log \log \rho)$ rounds. All bounds are asymptotically tight and hold with high probability, i.e. $1- n^{-\mathcal{O}(1)}$.
\end{abstract}
\begin{figure}
\centering
  \includegraphics[width=.7\textwidth]{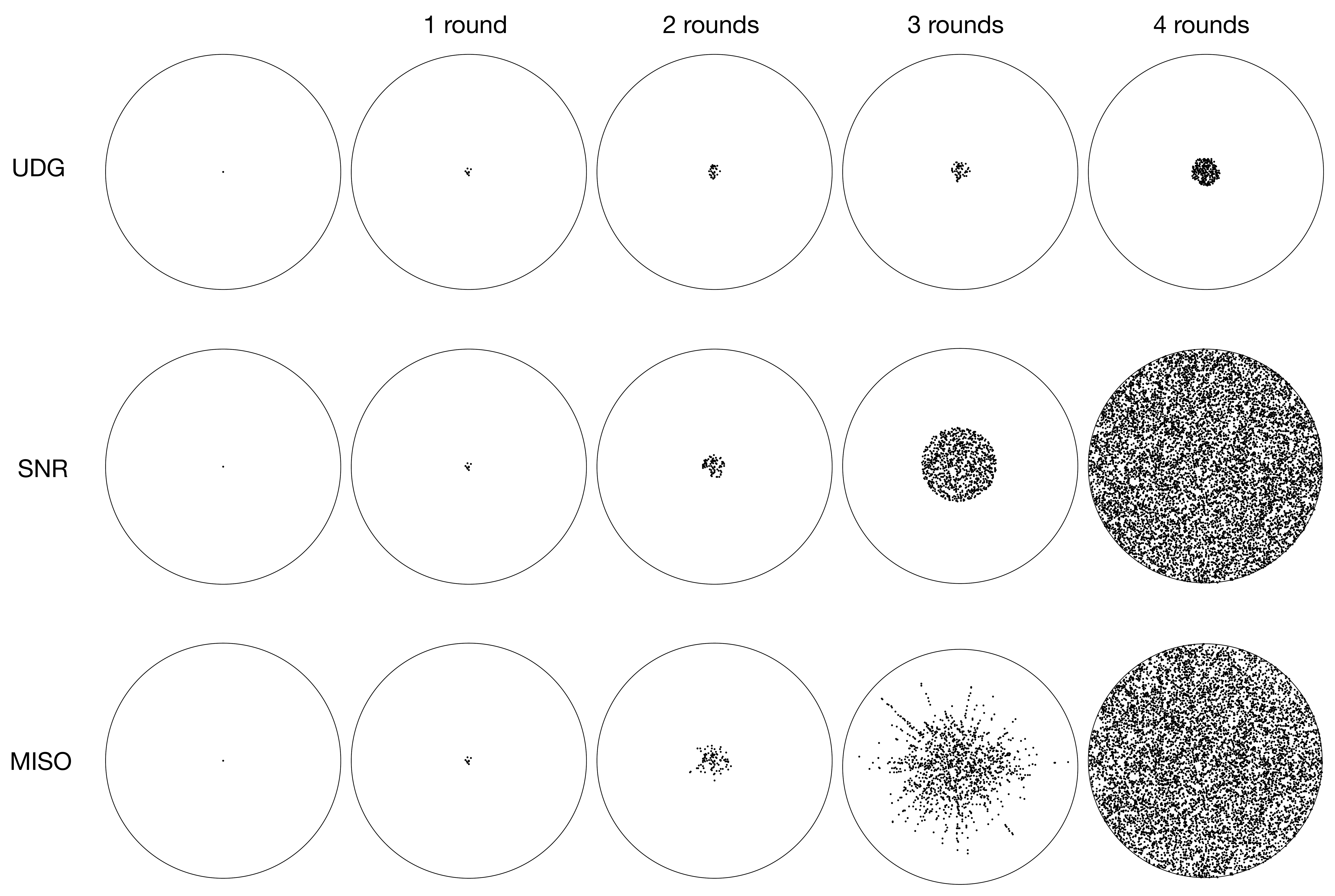}
  \caption{Four rounds of repeated collaborative broadcast in the UDG, SNR, and MIMO model for
  10,000 senders randomly distributed in a disk with radius 30 and wavelength $\lambda= 0.1$}
  \label{fig:teaser}
\end{figure}

%

\section{Introduction}
Understanding the limits of multi hop communications and broadcasting is important for the development of new technologies in the wireless communication sector. In the recent decades, ever more realistic models for communication have been considered. First, graph models have been used to describe the communication between wireless communication nodes, resulting in the Radio Broadcast model \cite{peleg2007time}. However, this model neglects the communication range, which has led to a geometric graph model, the Unit-Disk Graph (UDG) \cite{UnitDiskGraphs1990}, which we also consider here. It is based on the observation that there is a path loss of the sender energy with increasing distance between the sender and the receiver. In order to distinguish the signal from noise, the signal to noise energy ratio (SNR) has to be above certain threshold, which leads to the disk shaped model for radio coverage. 

However, if one carefully models the influence of the noise on the SNR, one sees that interference-free communication links inside a disk are still possible because of the polynomial nature of the energy path loss. This model is known as the SINR-model \cite{Goussevskaia07}. This, however, is still far from reality, where one sees a superposition of electromagnetic waves, which can be expressed by the addition of the complex Fourier coefficients. In this model, diversity gain and energy gain \cite{Tse_fundamentals_book} enable higher bandwidth and higher communication range. There is a trade-off between these two features and certain properties of the number of coordinated senders, receivers and the channel matrix have to be met. However, one sees that in practice this one-hop communication has already led to better networking solutions.

For our theoretical analysis, we concentrate on an open space model with no interfering communications. We want to find the theoretical limitations of a collaborative multi-hop broadcast. For this, we are interested in sending a carrier signal with no further modulated information. This signal is sent by the sender node positioned at the center of a disk in which all other nodes are randomly distributed.
Thus, in the first round the first sender activates some small number of neighboring nodes. Then, in every subsequent round, all of them try to extend the set of informed nodes as far as possible, who then join in the next round, until all nodes of the disk are informed (or the process cannot reach any further nodes). 

\section{Related Work}

Broadcasting algorithms have been widely optimized for speed, throughput, and energy consumption.  A lot of algorithms apply MAC (medium access control) protocols like TDMA (Time Division Multiple Access) \cite{UnitDiskGraphs1990,gandhi2008minimizing,Halldorsson:2018:LIS:3212734.3212766,UnitDiskSINRLotker2009}, CDMA (Code Division Multiple Access) \cite{5779066,Sirkeci-Mergen_First}, FDMA (Frequency Division Multiple Access) \cite{Sirkeci-Mergen_First} to increase spatial reuse. Physical models with high path loss exponent $\alpha > 2$ are beneficial here and increase the spatial reuse with only local interference. With spatial reuse, parallel  point-to-point communications are possible which either spread the same broadcast message in the network or pipeline multiple broadcast messages at the same time. The latter can achieve a constant broadcasting rate for path loss exponent $\alpha > 2$. Cooperative transmission with MISO (Multiple Input Single Output) or MIMO (Multiple Input Multiple Output) is applied to increase the transmission range and broadcast speed by a constant factor (where underlying MAC protocols still work). 

Here, we focus on broadcast speed and allow as many as possible nodes cooperate in transmitting the same broadcast message with MIMO. The obvious trade-off here is broadcast speed against broadcast rate, since pipelining and spatial reuse are limited.

Broadcasting has been first considered for a graph based model, where interference prevents communication and a choice has to be made which link should be used for propagation. Since we do not consider interference and allow the usage of all links, a simple flooding algorithm achieves the optimal bound of the diameter of the network. So, these works (see \cite{peleg2007time} for a survey) do not apply here. However, even if interference is considered  there is only a constant factor slow down in the Unit-Disk-Graph model \cite{gandhi2008minimizing}. Note that Unit-Disk-Graphs are connected, when the node density of the randomly placed nodes is large enough \cite{xue2004number}.

Launched by the seminal paper of \cite{Gupta00thecapacity}, the SNR (Signal to Noise Ratio) model has gained a lot of interest. Here, signals can be received if the energy of the sending nodes is a constant factor larger than the sum of noise energy and interference. This model leads to a smooth receiver area with near convexity properties \cite{AvienEmek12}.

If the energy of each sender is constrained, Lebhar et al.~\cite{UnitDiskSINRLotker2009}  show that the SNR-model does not give much improvement compared to the UDG-model. So, they incorporate the unit disk model into the SINR (Signal to Interference and Noise) model. The focus of their work is finding TDMA scheduling schemes to enhance the network capacity while the path-loss exponent in the SINR model is chosen with $\alpha > 2$ such that interferences have only local effects for unsynchronized transmitters.
%
In this context, the SNR model is used for each sender separately. So, the problem of broadcast mainly reduces to range assignment and scheduling problem, for which the number of rounds approaches the diameter~\cite{Halldorsson:2018:LIS:3212734.3212766}.

For the superposition model the problem of point-to-point communication has been considered mostly for beam-forming for senders (MISO/MIMO) or receivers (SIMO/MIMO). For MIMO (Multiple Input Multiple Output), most of the research is concerned with the energy gain and diversity gain, as well as the trade-off. For an excellent survey we refer to \cite{Tse_fundamentals_book}. Besides the approach, where sender antennas and receiver antennas are connected to one device and only a one hop communication is considered, a lot of work is dedicated to collaboration of independent senders and receivers, for which we now discuss some noteworthy contributions.

A transmission with cooperative beamforming requires phase synchronization of the collaborating transmitters to produce a beam and sharing the data to transmit.
Dong et al.~\cite{dpp2000} present for this a two phase scheme: in phase one, the message is spread among nodes in a disk in the plane around the node holding the original message. The open-loop and closed-loop approach
can be used to synchronize nodes to the destination or a known node position and time synchronization. In phase two, the synchronized nodes jointly transmit the message towards the destination.

In~\cite{freitas2012energyWSN} a three phase scheme is presented. In order to save energy for a Wireless Sensor Network, in the first phase, a sensor sends its message via SIMO to a group of nearby nodes. In the second phase the nodes use MIMO beamforming to another group of nodes nearby of the receiver and in the final phase the last group of nodes sends the message via MISO to the recipient.

For the MIMO model in \cite{NGS09_Linear_Capacity_Beamforming} and
\cite{ozgur2010linearCapacity} the authors give a recursive construction, which provides a capacity of $n$ for $n$ senders using MIMO communication using its diversity gain. Yet, in \cite{franceschetti2009capacity}, an upper bound of $\sqrt{n}$ for such a diversity gain has been proved. These seemingly contradicting statements have been addressed
in  \cite{ozgur2013spatial}, where they address the question whether distributed MIMO provides significant capacity gain over traditional multi-hop in large ad hoc networks with $n$ source-destination pairs randomly distributed over an area A. It turns out that the capacity depends on the ratio $\sqrt{A}/\lambda$, which describes the spatial degree of freedom. If it is larger than $n$ it allows $n$ degrees of freedom \cite{ozgur2010linearCapacity}, if it is less than $\sqrt{n}$ the bound of \cite{franceschetti2009capacity} holds. For all regimes optimal constructions are provided in these papers.
While in \cite{ozgur2010linearCapacity} path loss exponents $\alpha\in (2,3]$ are considered, for $\alpha>3$ the regularity of the node placement must be taken into account \cite{NGS09_Linear_Capacity_Beamforming}.

While this research is largely concerned with the diversity gain, we study the physical limitations of the energy gain in MIMO. In 
\cite{oyman2007power,oyman2011cooperative}, a method is presented  to amplify the signal by using spatially distributed nodes. They explore the trade-off between energy efficiency and spectral efficiency with respect to network size.
 In \cite{mlo13_telescopic_beamforming}, a distributed algorithm is presented in which rectangular collaborative clusters of increasing size are used to produce stronger signal beams. 
 
 Janson et al.~\cite{JS14_Beamforming_LogLog_TR} analyze the asymptotic behavior of the rounds for a unicast in great detail and prove an upper and lower bound of $\Theta(\log \log n)$ rounds. If the nodes are placed on the line it takes an exponential number of rounds \cite{JS13_Beamforming_Line}. The generalization of these observations for different path loss models can be found in \cite{diss-janson-2015}. In  \cite{6962163} it is shown that the sum of all cooperating sender power can be reduced to the order of one sender, while maintaining a logarithmic number of rounds to send a message over an $n$ hop distance.

A practical approach already uses this technology. Glossy \cite{5779066} is a network architecture for time synchronization and broadcast including a network protocol for flooding, integration in network protocols of the application, and implementation in real-world sensor nodes. If multiple nodes transmit the same packet in a local area, the same symbol of the different transmitters will overlap at a receiver without inter-symbol interference if the synchronization is sufficient. The superposed signals of the same message have random phase shifts and in the expectation add up constructively. Faraway, out of sync, transmitters produce noise-like interference the influence of which is alleviated at the receiver via pseudo-noise codes. While a high node density increases interference in common network protocols, a higher density is beneficial here and increases the transmission range and reduces the number of broadcasting rounds. 

Glossy is the underlying technology for the so-called  Low-Power Wireless Bus~\cite{Ferrari:2012:LWB:2426656.2426658}, where this multi-hop broadcast allows to flood the network with a broadcasting message. The energy efficiency was further improved in Zippy \cite{sutton2015zippy}, which is an on-demand flooding technique providing robust wake-up in the network. Unlike Glossy, Zippy uses an asynchronous wake-up flooding. In \cite{kumberg2017exploiting}  the problem of Rayleigh fading for synchronized identical signals is addressed by producing a low frequency wake-up signal, which results from the beat frequency of closely chosen frequencies. This allows the usage of a passive receiver technology.

Sirkeci-Mergen et al.~\cite{Sirkeci-Mergen_First} propose a multistage cooperative broadcast algorithm similar to our work. Their nodes are also uniformly distributed in a disk. A continuum approximation is used to approximate the behavior of the disk with high node density. A minimum SNR threshold is assumed for successful reception of the message. Their algorithm works in stages, in the first stage, the node at the center of the disk transmits the message. All nodes which receive this message are considered as level one. In the next stage, level one nodes re-transmit the message, in this way set of informed nodes keeps growing in radially outward direction. Nodes belonging to same levels form concentric rings. Source node emits single block of data. 

A similar problem and a similar algorithm has been considered in~\cite{sirkeci2010broadcast}. Sirkeci-Mergen et al.~consider source node transmitting a continuous message signal. Initially source node which is at the center of the disk, transmits the message signal. In the next round, level one nodes, i.e. the set of nodes that received the message in the previous round, transmit the message signal which is received by next level and the source node does not transmit message. In the following round, the source transmits the next message block. In this way, levels send and receive the message block in alternate rounds. In our work, we consider that in each round, all informed nodes send a single message cooperatively and we prove bounds on the number of rounds needed.

Jeon et al.~\cite{jeon2007two} also consider a system model similar to our work. They use two phase opportunistic broadcasting to achieve linear increase in propagation distance. In phase one, nodes inside a disk of specific radius broadcast message with different random phases while in phase two, a node broadcasts the message to its neighboring nodes. These phases are performed repeatedly to broadcast the message. Improving on this work we obtain better bounds by coordinating the phase of the nodes, while we
consider only the path loss factor of $\alpha=2$.

To our knowledge, no research so far has evaluated the asymptotic number of rounds to cover the disk using cooperative broadcast using MIMO, which is the main focus of this work. While \cite{5779066,Ferrari:2012:LWB:2426656.2426658,kumberg2017exploiting,sutton2015zippy} use only simulation and \cite{jeon2007two,sirkeci2010broadcast,Sirkeci-Mergen_First} prove all their statements only for the expectation in the continuum limit, i.e. when the number of nodes approaches infinity. Our results are to our knowledge the first asymptotic results in MIMO that hold for a finite number of nodes $n$ with high probability, i.e. $1-n^{-\mathcal{O}(1)}$. 

\paragraph{Notations}
The $L_2$-norm is denoted by $|\!|p|\!|_2 = \sqrt{x^2  + y^2}$ for $p=(x,y) \in \REAL^2$.  
For representation of signal waves we use complex numbers  $\COMPLEX$ where  the imaginary number is $i=\sqrt{-1}$. For $z= a+b i$ the complex conjugate is $z^* = a-bi$, the absolute value $|z| = \sqrt{z \cdot z^*}=
\sqrt{a^2+b^2}$ and the real part is $\Re(z) = a = \frac{z+z^*}{2}$, the imaginary part is $\Im(z) = b = \frac{z-z^*}{2}$. The exponent for the base of the Euler number $e$ gives $e^{a+b i} = e^a (\cos b + i \sin b)$.
Note that $z = |z| \cdot e^{i \arg(z)}$, where $\arg(z) \in [0, 2\pi)$.

\section{The Models}
We assume $n$ nodes $v_1, \ldots, v_n \in \REAL^2$ uniformly distributed in a disk of radius $R$ centered at origin, where the additional node $v_0$ resides. The density is denoted by $\rho= n/(\pi R^2)$. Each node knows the disk radius $R$, its location and all nodes are perfectly synchronized, see Fig.~\ref{broadfig}.

\begin{figure}[ht]
\centerline{\includegraphics[width=1\linewidth]{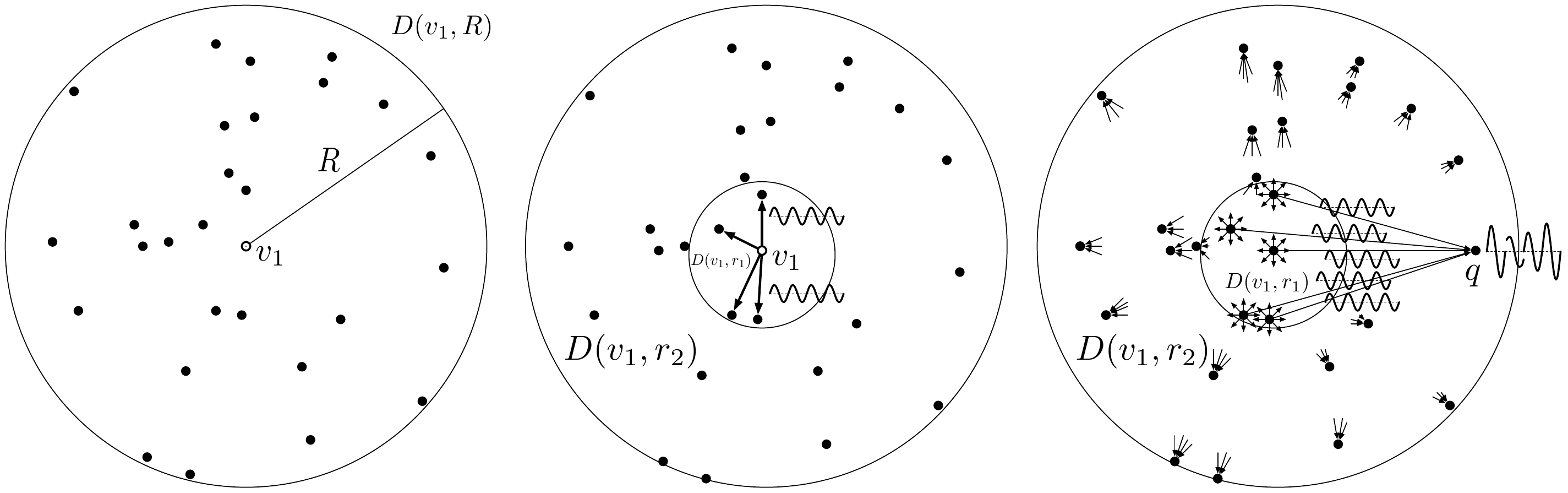}}
\caption{Randomly positioned nodes in a disk of radius $R$ with $v_1$ in the center. The first two phases of the considered collaborative Broadcast algorithms\label{broadfig}}
\end{figure}

We say that a node is triggered or informed, when it has received a signal carrying no further information. The objective is to send the broadcast signal from the center node $v_0$ to all other nodes where in each round the set of sending nodes is increased by the triggered nodes of the last round.  

We concentrate on broadcasting a pure sinusoidal signal and leave the problem of broadcasting a complete message to subsequent work.
The sinusoidal signal has wavelength $\lambda$ and we normalize the speed of light as $c=1$ by choosing proper units for time and space. In our theoretical framework we assume that every node knows its exact position in the plane, is synchronized (well enough in order to emit phase-coordinated signals) and is able to precisely emit the signal at a given point in time with a certain phase shift and a fixed amplitude.

We consider three communication models in our analysis: Unit-Disk-Graph (UDG),
 the Signal-to-Noise Ratio (SNR), 
 and MISO/MIMO (Multiple Input---Single/Multiple Output) for coordinated senders. 
The difference between MIMO and MISO is whether we consider a single receiver or multiple receivers. Since, MIMO is the more general term we prefer this term throughout this paper.

\subsection{The MIMO Model}
The coordination of nodes refers here to synchronized signals allowing  a radiation pattern containing strong beams, i.e. a beamforming gain.
Many physical properties are covered in the {\bf Multiple Input/Multiple Output (MIMO)} model based on superposition of waves.
Every node can serve either as sender or as receiver. A node can demodulate a received signal $rx(t) \in \COMPLEX$ if the square of the length of the Fourier coefficient over an interval of $\delta\gg \lambda$ is larger than $\beta$, i.e.
\begin{eqnarray}
 z &=&  \frac1{\delta}\int\displaylimits_{t=t_0}^{t_0+\delta} \hbox{rx}(t) \  e^{-i 2 \pi t/ \lambda} \mathrm{d}t \ ,\nonumber \\
 |z|^2 / N_0 & \geq & \beta \ .\label{eqsnr}
 \end{eqnarray}
with imaginary number $i=\sqrt{-1}$ and $t$ denoting time. In this notation we normalize the energy with respect to the time period and assume $\delta$, $N_0$ and $\beta$ are constant. 
The bound~(\ref{eqsnr}) demands that the signal-to-noise energy ratio is large enough to allow a successful signal reception,  i.e. $\text{SNR} \ge \beta$ for signal power $|z|^2$ and additive white noise with power $N_0$ over time $\delta$.

Each sending node $j \in \{1,\ldots, n\}$ can start sending at a designated time $t_1$ and stops at $t_2$, described by the function
\begin{equation} s_j(t) = \begin{cases}
  a \cdot e^{i 2\pi (t-t_1)/ \lambda} \ , & t \in [t_1,t_2] \ ,\\
  0 \ , & \hbox{otherwise} \ , \nonumber
  \end{cases}\end{equation}
  where $a\in \COMPLEX$ may encode some signal information, e.g. via Quadrature Amplitude Modulation (QAM). Since we are only interested in transmitting a single signal we choose $a=1$ or
  $a= e^{i \varphi}$, when we use a phase shift $\varphi$.
  The total signal received at a node $q \in \REAL^2$ is modeled by
\begin{equation}
 \hbox{rx}(t) = \sum_{j=1}^n \frac{s_j(t- |\!|q - v_j|\!|_2)}{|\!|q-v_j|\!|_2}\ , \nonumber
 \end{equation} 
which models the free space transmission model with  a  path loss factor of two for the logarithm of sender and receiver energy ratio. We are aware, that this equation describes only the far-field behavior, which starts at some constant numbers $c_f>1$ of wavelengths, i.e. $|\!|q-v_i|\!|_2 \geq c_f \lambda$ (Antennas have unit size and are neglected in this equation). Hence, every time  $|\!|r-v_i|\!| < c_f \lambda$,
we will replace the denominator $|\!|q-v_i|\!|_2$ by  $c_f \lambda$ in this expression. We assume that $c_f \lambda \leq 1$ and therefore $\lambda < 1$.

\subsection{Unit Disk Graph Model}

For nodes $v_1, \ldots, v_n$, the geometric {\bf Unit Disk Graph} is defined by the set of edges $(v_i,v_j)$ where  nodes have distance $|\!|v_i,v_j|\!|_2 \leq 1$. In each round a message or signal can be sent from a node to an adjacent node. So, collaborative sending is simply ignored. Yet, we also ignore the negative effect of interference. In this model messages can be sent along edge in parallel, independently from what happens somewhere else.
     
The following Lemma shows the strong relationship between the single sender MIMO model and the UDG model.
\begin{lemma}
If only one sender $u$ sends a signal in the MIMO model with amplitude $a\in \REAL^+$, then a node $v$ in distance $d$ receives it if and only if $d\leq \frac{a}{\sqrt{\beta N_0}}$.
\end{lemma}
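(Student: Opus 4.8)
The plan is to specialize the general reception condition~(\ref{eqsnr}) to a single active sender and show that the demodulation integral collapses to a clean closed form for $|z|$. First I would substitute the single-sender received signal into the Fourier coefficient. With only $u$ active at distance $d = |\!|v - u|\!|_2$ from $v$, the received-signal formula gives $\hbox{rx}(t) = s_u(t-d)/d$, and, assuming we are in the far field (so the denominator is genuinely $d \geq c_f \lambda$) and that the integration window $[t_0, t_0+\delta]$ lies entirely inside the arrival interval $[t_1 + d, t_2 + d]$ during which the delayed carrier is present at $v$, we have $\hbox{rx}(t) = \frac{a}{d}\, e^{i 2\pi(t - d - t_1)/\lambda}$ throughout the window.

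The key computation is then the demodulation itself. The crucial observation is that the demodulation kernel $e^{-i 2\pi t/\lambda}$ exactly cancels the time dependence of the carrier, since
$$e^{i 2\pi (t - d - t_1)/\lambda} \cdot e^{-i 2\pi t/\lambda} = e^{-i 2\pi (d + t_1)/\lambda},$$
which is independent of $t$. Hence the integrand is constant, the normalizing factor $1/\delta$ cancels the length of the window, and
$$z = \frac{a}{d}\, e^{-i 2\pi (d + t_1)/\lambda}, \qquad |z| = \frac{a}{d},$$
using $a \in \REAL^+$ and that the exponential has unit modulus.

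It remains to apply the threshold. Plugging $|z|^2 = a^2/d^2$ into the reception condition $|z|^2/N_0 \geq \beta$ yields $a^2/(d^2 N_0) \geq \beta$, i.e. $d^2 \leq a^2/(\beta N_0)$, and taking positive square roots gives exactly $d \leq a/\sqrt{\beta N_0}$. Since $|z| = a/d$ is strictly decreasing in $d$, the inequality holds precisely on this range, which delivers the ``if and only if''.

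I expect the main obstacle to be not the algebra but the bookkeeping of the two modeling assumptions hidden in the statement: that the sampling window sees a fully ``on'' carrier (a timing condition on $t_0, \delta$ relative to $t_1, t_2$ and the propagation delay $d$, which perfect synchronization lets us arrange), and the far-field caveat that the denominator equals $d$ only when $d \geq c_f \lambda$. In the near-field region $d < c_f \lambda$ the denominator is replaced by $c_f \lambda$, so $|z|$ saturates and no longer decreases with $d$; one then has to argue separately that, under the standing assumption $c_f \lambda \leq 1$, such nodes are already comfortably within the claimed reception radius and therefore do not break the equivalence.
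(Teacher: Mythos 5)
Your proposal is correct and follows essentially the same route as the paper's own proof: substitute the single-sender received signal into the demodulation integral, observe that the kernel $e^{-i2\pi t/\lambda}$ cancels the carrier's time dependence so the integrand is constant, obtain $|z|=a/d$, and compare with the threshold $\beta N_0$. Your extra bookkeeping on the timing of the sampling window and the near-field saturation at $d<c_f\lambda$ is sound (and under the standing assumption $c_f\lambda\leq 1$ indeed harmless) but goes beyond the paper, whose proof silently works entirely in the far-field regime.
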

\begin{proof}
In the MIMO model the sender produces the signal $s_j(t) = a\  e^{i 2\pi t/ \lambda +i\phi}$ for $t \in [0,T]$ for some sending time $T$ and phase shift $\phi$ and otherwise $s_j(t) = 0$.
For distance $d$ the received signal is 
$$\hbox{rx}(t) =  \frac{s_j(t- d)}{d}  = \frac{a\  e^{i 2\pi (t-d)/ \lambda+i\phi}}{d}
\ ,$$ if $t \in [d,T+d]$ and otherwise $\hbox{rx}(t)= 0$.
Therefore for $\delta = T$:
\begin{eqnarray*}
z & = &  \frac1{T}\int\displaylimits_{t=d}^{d+T} \hbox{rx}(t) \  e^{-i 2 \pi t/ \lambda} \mathrm{d}t \\
& = &  \frac1{T}\int\displaylimits_{t=d}^{d+T}\frac{a\  e^{i 2\pi (t-d)/ \lambda+i\phi}}{d}\  e^{-i 2 \pi t/ \lambda} \\
& = &  \frac1{T}\int\displaylimits_{t=d}^{d+T}\frac{a\  e^{-i 2\pi d/ \lambda +i\phi}}{d}\  \mathrm{d}t \\
& = &  \frac{a}{d}\ \   e^{-i 2\pi d/ \lambda+ i \phi} \\
\end{eqnarray*}
Now $ |z| = a/d$ and therefore if $|z|^2 = \frac{a^2}{d^2} \geq \beta N_0$ then $u$ can receive the  signal.
\end{proof}
This Lemma implies that if $a^2= \beta N_0$, then the MIMO model is equivalent to the  Unit-Disk Graph (UDG)  model with  sending radius~1, if only one sender is active. In order to fairly  compare these two models, we fix $a=1$ and set  $\beta N_0 = 1$. 

%
\subsection{The Signal-to-Noise-Ratio Model}

The {\bf Signal-to-Noise-Ratio (SNR)} model adds the received signal energy of all senders, i.e. a signal is received at $q$ in the SNR model, if for sender energy $S_j := a_j^2$, where $a_j$ denotes the amplitude of sender $v_j$ the sum of the received signal energy is large enough: 
$$\hbox{\sl RS} := \sum_{j=1}^n \frac{S_j}{(|\!|q-v_j|\!|_2)^2} \ , \quad \hbox{where} \quad \quad \frac{\hbox{\sl RS}}{N_0} \geq \beta \ .$$
If we assume that the senders' starting time is not coordinated but independently chosen at random, then the following Lemma shows that the MIMO model in the expectation is equivalent to the SNR model.

\begin{lemma}
At the receiver $q$ the expected signal energy $S$ of senders $v_1, \ldots, v_n$ with random phase shift $\phi_i$ and amplitude $a_i$ in the MIMO model is $$\Exp{S} = \hbox{\sl RS} =  \sum_{j=1}^n  \frac{a_j^2}{(|\!|q-v_j|\!|_2)^2}  \ .$$
\end{lemma}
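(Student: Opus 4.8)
The plan is to reduce the computation to the single-sender case already established and then exploit the independence of the random phases. First I would invoke the single-sender Lemma: its proof shows that when sender $v_j$ alone transmits with amplitude $a_j$ and phase shift $\phi_j$, its contribution to the demodulated Fourier coefficient at $q$ is $z_j = \frac{a_j}{d_j}\, e^{-i 2\pi d_j/\lambda + i\phi_j}$, where I write $d_j := |\!|q - v_j|\!|_2$. Since $\hbox{rx}(t)$ is by definition the sum of the individually received signals and the transform in~(\ref{eqsnr}) is linear in $\hbox{rx}$, the full coefficient is $z = \sum_{j=1}^n z_j$, and the received signal energy is $S = |z|^2 = z\,z^*$.

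Next I would expand the square into diagonal and off-diagonal contributions. Writing $\theta_j := -2\pi d_j/\lambda + \phi_j$ so that $z_j = \frac{a_j}{d_j} e^{i\theta_j}$, we obtain
\begin{equation}
S = \sum_{j=1}^n \sum_{k=1}^n \frac{a_j a_k}{d_j d_k}\, e^{i(\theta_j - \theta_k)} = \sum_{j=1}^n \frac{a_j^2}{d_j^2} + \sum_{j \neq k} \frac{a_j a_k}{d_j d_k}\, e^{i(\theta_j - \theta_k)} \ . \nonumber
\end{equation}
The diagonal terms are already exactly the summands of $\hbox{\sl RS}$, so everything reduces to showing that the off-diagonal terms vanish in expectation.

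To finish, I would take the expectation over the phases and use the modelling assumption that the $\phi_j$ are independent and uniformly distributed on $[0,2\pi)$, which is what ``uncoordinated starting times chosen independently at random'' yields after reducing modulo one period. For $j \neq k$ the deterministic factor $e^{-i 2\pi(d_j - d_k)/\lambda}$ pulls out of the expectation, and by independence $\Exp{e^{i(\phi_j - \phi_k)}} = \Exp{e^{i\phi_j}}\,\Exp{e^{-i\phi_k}} = 0$, since $\Exp{e^{i\phi}} = \frac{1}{2\pi}\int_0^{2\pi} e^{i\phi}\,\mathrm{d}\phi = 0$ for a uniform phase. Hence every cross term has zero expectation, and by linearity $\Exp{S} = \sum_{j=1}^n a_j^2/d_j^2 = \hbox{\sl RS}$.

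The only real subtlety — and the step I would be most careful to state precisely, rather than one that is technically hard — is the probabilistic assumption on the phases. Because the claim is an \emph{expectation} statement, it hinges solely on the $\phi_j$ being independent with a distribution whose circular mean is zero; uniformity over a full period suffices and matches the physical picture of unsynchronized transmitters. No concentration or tail bound is required, only first moments, so once the phasor sum is set up correctly the argument is routine.
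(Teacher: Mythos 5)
Your proof is correct and takes essentially the same route as the paper's: both reduce the demodulated Fourier coefficient to the phasor sum $z=\sum_{j} \frac{a_j}{d_j}e^{i\theta_j}$, expand $|z|^2$ into diagonal and cross terms, and annihilate the cross terms in expectation via independence and uniformity of the phases (the paper's $\sigma_j = -2\pi\,|\!|q-v_j|\!|_2/\lambda + \phi_j \bmod 2\pi$ is exactly your $\theta_j$). Your only deviation---obtaining the phasor sum by citing the single-sender lemma plus linearity of the transform rather than recomputing the integral for the superposed signal, as the paper does---is presentational, not substantive.
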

\begin{proof}
We assume that at the receiver $q$ senders have started sending such that the sending time intervals $[t_j, t'_j]$ of sender $v_j$ covers the time interval $[T,T+\delta]$ at $q$, i.e. $t_j+ |\!|q-v_j|\!|_2 \leq t$ and $t'_j |\!|q-v_j|\!|_2 \leq t+ \delta$. 
Each sender sends the signal  $s_j(t) = a_j\  e^{i 2\pi t/ \lambda +i\phi_j}$ in interval $[t_j, t'_j]$ with random phase shift $\phi_j$. 
The received signal at $q$ during $t\in [T,T+\delta]$ is by definition:
\begin{eqnarray*}
\hbox{rx}(t) &=&   \sum_{j=1}^n \frac{s_j(t- |\!|q - v_j|\!|_2)}{|\!|q-v_j|\!|_2}\\
& = &  \sum_{j=1}^n  \frac{a_j}{|\!|q-v_j|\!|_2} \  e^{i 2\pi (t-|\!|q-v_j|\!|_2)/ \lambda +i\phi_j}\ .
\end{eqnarray*}
So, the received signal $z$ is:
\begin{eqnarray*}
z &=&  \frac1{\delta}\int\displaylimits_{t=t_0}^{t_0+\delta} \hbox{rx}(t) \  e^{-i 2 \pi t/ \lambda} \mathrm{d}t \\
&=&  \frac1{\delta}\int\displaylimits_{t=t_0}^{t_0+\delta}
 \sum_{j=1}^n  \frac{a_j}{|\!|q-v_j|\!|_2} \  e^{i 2\pi (t-|\!|q-v_j|\!|_2)/ \lambda +i\phi_j}\  e^{-i 2 \pi t/ \lambda} \mathrm{d}t \\
 &=&  \frac1{\delta}\int\displaylimits_{t=t_0}^{t_0+\delta}
 \sum_{j=1}^n  \frac{a_j}{|\!|q-v_j|\!|_2} \  e^{-i 2\pi |\!|q-v_j|\!|_2/ \lambda +i\phi_j} \mathrm{d}t \\
&=&  \
 \sum_{j=1}^n  \frac{a_j}{|\!|q-v_j|\!|_2} \  e^{-i 2\pi |\!|q-v_j|\!|_2/ \lambda +i\phi_j}  \\
 &=&  \ 
 \sum_{j=1}^n  b_j \  e^{i\sigma_j} \ , \\
\end{eqnarray*}
where we substitute $b_j= \frac{a_j}{|\!|q-v_j|\!|_2} $ and $\sigma_j=-2\pi |\!|q-v_j|\!|_2/ \lambda +\phi_j \bmod 2 \pi$. 
Note that $\sigma_1, \ldots, \sigma_n$ are again independent random variables and uniform distributed over $[0,2\pi]$. Now, we observe.
\begin{eqnarray*}
\Exp{\left| \sum_{j=1}^{n} b_j e^{i \sigma_j} \right|^2 } &=&
\Exp{\left(\sum_{j=1}^{n} b_j e^{i \sigma_j} \right) \cdot  \left(\sum_{j=1}^{n} b_k e^{- i \sigma_j} \right) } \\
&=&
\Exp{\sum_{j,k \in \{1, \ldots,  n\}} b_j b_k\   e^{i (\sigma_j - \sigma_k)}  } \\
&=&
\sum_{j,k \in \{1, \ldots,  n\}}  b_j b_k\   \Exp{e^{i (\sigma_j-\sigma_k)}  } \\
&=&
\sum_{j \in \{1, \ldots,  n\}}  b_j^2\   \Exp{e^{i (\sigma_j - \sigma_j)}  }
+
 \sum_{j,k \in \{1, \ldots,  n\}, j\neq k}  b_j b_k\  \Exp{e^{i \sigma_j}} \Exp{e^{-i \sigma_k}  } \\
 &=&
 \sum_{j,k \in \{1, \ldots,  n\}} b_j^2\ ,
\end{eqnarray*}
where we use that $\sigma_j$ and $\sigma_k$ are independent and that $ \Exp{e^{i \sigma_j}  } =0$ because $\sigma_j$ is uniform over $[0, 2\pi]$.
Therefore $\Exp{|z|^2} =  \sum_{j=1}^n  \frac{a_j^2}{(|\!|q-v_j|\!|_2)^2}$.
\end{proof}
This proof can  also be found in \cite{2012RandomMIMOJanson}.  Unlike in the coordinated MIMO model, in the SNR model signals are sent with random phasing which induces a more regular radiation pattern.

Under the assumption that $\Exp{|z|^2}/N_0 \geq \beta$ induces a successful reception, $a_j= 1$ and  $\beta N_0 = 1$ we derive the  Signal-to-Noise Ratio (SNR) model, where the energy of the uncorrelated received signals add up. Again, this model reduces to the UDG model if only one node is sending.

\section{UDG Coverage} \label{sec:UDG}

Let $\rho =\frac{n}{\pi R^2}$ denote the density of nodes. The probability
that $n-1$ nodes are not in a given area of size $\pi/8$ is 
$$
\left(1-\frac{\pi/8}{\pi R^2}\right)^{n-1} = 
\left(1-\frac{\rho \pi}{8 n}\right)^{n-1} \leq \exp\left(-\left( 1-\frac1n\right) \frac{\rho \pi}{8} \right)\ ,$$ which is less than $1/n^{c}$ for $\rho \geq \frac{8}{\pi} c \ln (n+1)$ for any $c>1$ and $n\geq 2$. 
Now, consider six equally sized  sectors of a unit disk around a node; there is at least a node with probability $1 - n^{-c+1}$. 

\begin{figure}[ht]
\centerline{\includegraphics[width=0.25\linewidth]{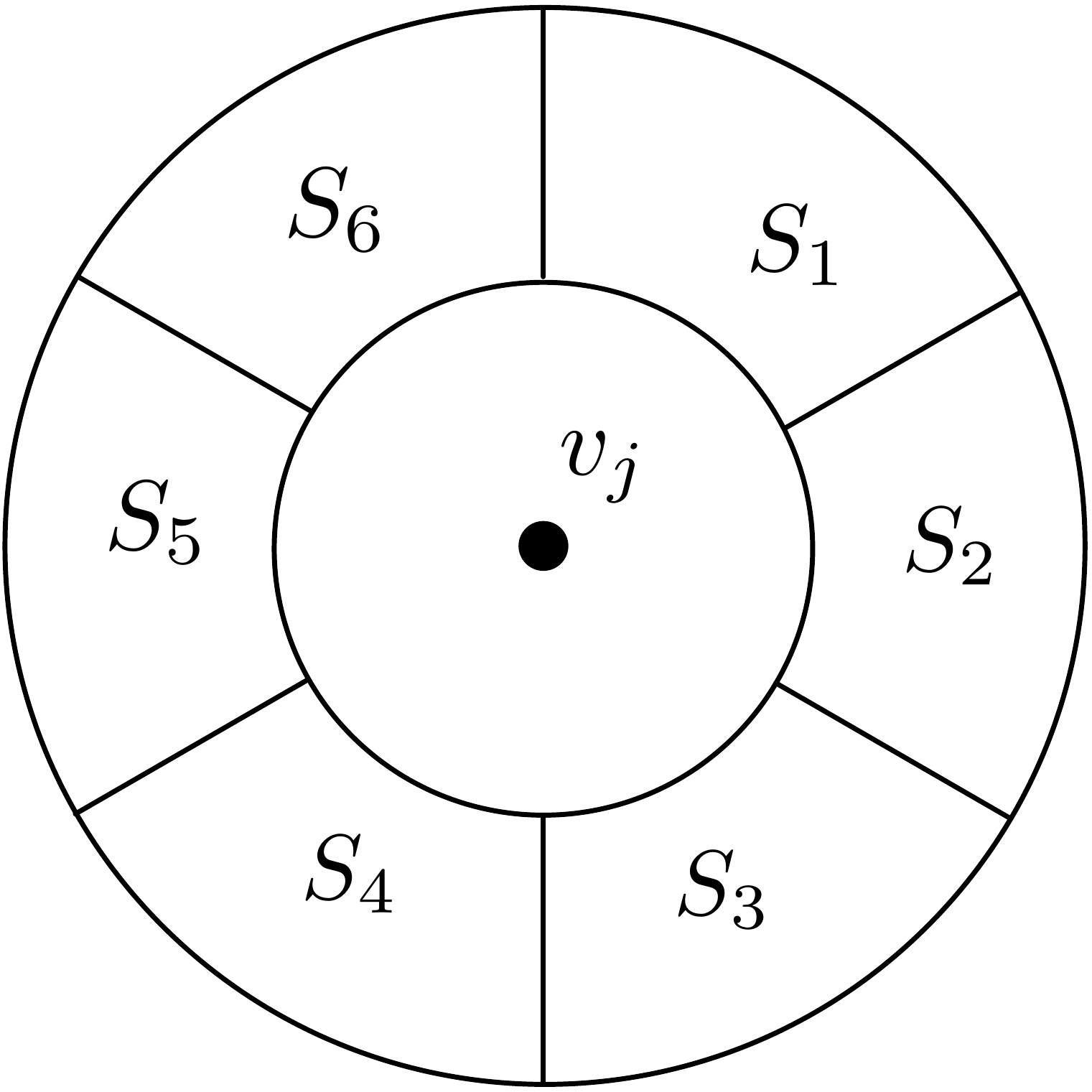}}
\caption{For a disk with radius $1$ and area $\pi$ and node density $\rho = \Omega(\log n)$, each of the sectors $S_1, \ldots, S_6$ around a disk with area $\pi/8$ and sender $v_j$ in the center are not empty with high probability\label{sectors}}
\end{figure}

From this, it follows that UDG is connected (see~\cite{xue2004number} for a better bound) and that the diameter of the UDG is at most $8R = \mathcal{O}\left(\sqrt{n/\rho}\right)$.

\begin{lemma}
For $\rho>1$ in the UDG model, broadcasting needs $\Omega(\sqrt{n/\rho})$ rounds to inform all nodes with high probability.
\end{lemma}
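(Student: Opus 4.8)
The goal is to establish a lower bound of $\Omega(\sqrt{n/\rho})$ rounds for broadcasting in the UDG model. The plan is to argue that the broadcast cannot reach the boundary of the disk in fewer rounds, because each round extends the set of informed nodes by at most a bounded distance in the plane, whereas the farthest nodes are at Euclidean distance $R = \sqrt{n/(\pi\rho)} = \Theta(\sqrt{n/\rho})$ from the center.

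First I would observe the geometric constraint inherent to the UDG model: an edge exists only between nodes at Euclidean distance at most $1$. Hence in any single round, a newly informed node must be adjacent to an already-informed node, so its distance from the origin $v_0$ can exceed the maximum distance of any previously informed node by at most $1$. By induction on the round number $k$, every node informed after $k$ rounds lies within Euclidean distance $k$ of the center. This is purely deterministic and requires no probabilistic reasoning.

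Next I would identify a node that must be informed for the broadcast to complete, and lower-bound its distance from the center. Since broadcasting must inform \emph{all} $n$ nodes, it in particular must inform whichever node lies closest to the disk boundary. With high probability there exists a node at distance $\Omega(R)$ from the center: the probability that no node lands in an annular region near the boundary (say the annulus between radius $R/2$ and $R$, which has area $\Theta(R^2)$ and hence expected node count $\Theta(\rho R^2) = \Theta(n)$) is exponentially small, so such a far node exists with probability $1 - n^{-\Omega(1)}$. Combining this with the per-round advancement bound, the number of rounds must be at least $\Omega(R) = \Omega(\sqrt{n/\rho})$, since $R = \sqrt{n/(\pi \rho)}$.

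The main obstacle, though a mild one, is making the high-probability existence of a sufficiently distant node rigorous and compatible with the stated failure probability $n^{-\mathcal{O}(1)}$; this can be handled by the same Poisson-type tail estimate used earlier in the excerpt for the emptiness of sectors, applied to an annulus of area $\Theta(R^2)$ rather than $\pi/8$. One subtlety to flag is the role of the assumption $\rho > 1$: it guarantees $R < \sqrt{n/\pi}$, so $R$ is genuinely of order $\sqrt{n/\rho}$ and the bound is nontrivial, and it ensures the annulus near the boundary has enough expected nodes that the emptiness probability is negligible. I expect the argument to mirror the upper-bound discussion preceding this lemma, which already notes the diameter is $\Theta(\sqrt{n/\rho})$; the lemma simply supplies the matching lower bound via the elementary distance-per-round inequality.
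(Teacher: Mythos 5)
Your proposal is correct and takes essentially the same approach as the paper: a deterministic unit-distance-per-round bound from the UDG edge length, combined with the high-probability existence of a node near the boundary, where the paper uses the thin rim of width $1$ beyond radius $R-1$ (empty with probability at most $e^{-\Theta(\sqrt{n\rho})}$) while you use the fatter annulus $[R/2,R]$ (empty with probability $(1/4)^{n-1}$). This is only a parameter choice in the witness region, trading a constant factor in the round count for an even stronger failure probability, and both yield the stated $\Omega(\sqrt{n/\rho})$ bound.
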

\begin{proof}
The probability that none of the $n-1$ non centered nodes are at a distance larger than $R-1$ from the center is for $R>1$:
$$ \left(1- \frac{2R-1}{R^2} \right)^{n-1} \leq e^{-\frac{n-1}{R}} = e^{-\Theta(\sqrt{n \rho})}. $$ Hence, with high probability some nodes are in this outer rim, which can be reached only after at least $R-2 = \Omega(\sqrt{n}{\rho})$ rounds. \qed
\end{proof}
For large enough density $\rho=\Omega(\log n)$ this bound is tight.
The probability
that $n-1$ nodes are not in a given area of size $\pi/8$ is 
 less than $1/n^{c}$ for $\rho \geq \frac{8}{\pi} c \ln (n+1)$ for any $c>1$ and $n\geq 2$. 
%
%
\begin{theorem}
For $\rho= \Omega(\log n)$ in the UDG model, broadcasting needs $\Theta(\sqrt{n/\rho})$ rounds to inform all nodes with high probability.
\end{theorem}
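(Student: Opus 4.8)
The statement is a $\Theta$-bound, so the plan is to establish the matching lower and upper bounds separately and then combine them. Crucially, the restriction $\rho = \Omega(\log n)$ is needed only for the upper bound, whereas the lower bound already follows from the previous Lemma. For the lower bound I would simply observe that $\rho = \Omega(\log n)$ implies $\rho > 1$ for all sufficiently large $n$, so the preceding Lemma applies verbatim and yields that $\Omega(\sqrt{n/\rho})$ rounds are necessary with high probability: with high probability some node lies in the outer rim at distance larger than $R-1$ from the center, and reaching it requires $\Omega(R) = \Omega(\sqrt{n/\rho})$ hops.

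For the upper bound, the plan is to show that flooding from the center node $v_0$ terminates within a number of rounds equal to the eccentricity of $v_0$, which is at most the UDG diameter. Since the model ignores interference, in each round every informed node successfully informs all of its UDG-neighbors, so after $k$ rounds precisely the nodes at graph-distance at most $k$ from $v_0$ are informed; it therefore suffices to bound the diameter. Here I would invoke the sector argument already set up in this section: for $\rho \geq \frac{8}{\pi} c \ln(n+1)$ each fixed sub-sector of area $\pi/8$ around a fixed node is empty with probability at most $n^{-c}$, so by a union bound over all $n$ nodes and their six sectors, with probability at least $1 - 6 n^{-c+1}$ every node simultaneously has a neighbor in each of its six directional sectors. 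Choosing $c > 2$ makes this a high-probability event. Conditioned on it, a greedy routing step toward any target always finds a next hop that reduces the remaining Euclidean distance by a constant, bounding the hop-distance between any two nodes by $8R = \mathcal{O}(\sqrt{n/\rho})$, exactly as stated above. Combining the two bounds then gives $\Theta(\sqrt{n/\rho})$ rounds with high probability, the total failure probability $n^{-\Omega(1)}$ being the sum of the failure probabilities of the lower-bound rim event and the upper-bound sector event.

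The main obstacle is the geometric heart of the upper bound: turning the event ``every sector is nonempty'' into a genuine $\mathcal{O}(R)$ hop-diameter. The delicate point is ensuring that each greedy hop makes progress bounded away from zero, since a neighbor could in principle sit arbitrarily close to the current node and contribute no progress at all. This is precisely why one works with sub-sectors of area $\pi/8$ rather than the full $\pi/6$ sectors of the unit disk: it forces the guaranteed neighbor to lie at a distance bounded below, so that moving into the sector aimed at the target decreases the distance-to-go by a fixed constant each round, while the $60^\circ$ angular width keeps the chosen neighbor's direction within less than $90^\circ$ of the target direction and hence the projected progress positive. Verifying this constant-progress claim, checking that one does not overshoot within the final constant-distance neighborhood, and confirming that the progress accumulates to the claimed $8R$ over the whole path, is the part that requires care; the probabilistic union bound and the flooding argument are then routine.
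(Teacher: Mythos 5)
Your proposal is correct and follows essentially the same route as the paper: the lower bound is inherited from the preceding lemma (since $\rho=\Omega(\log n)$ implies $\rho>1$), and the upper bound uses the same nonempty-$\pi/8$-sector event with a union bound, then converts it into constant Euclidean progress per hop of a greedy/corridor route toward the target, giving a hop count of $\mathcal{O}(R)=\mathcal{O}(\sqrt{n/\rho})$ for the flooding process. The paper's version routes within a corridor of width $2$ around the segment $v_jv_k$ and claims progress at least $\frac14$ per hop (yielding $4R$), which is the same geometric argument you sketch; indeed you flag the constant-progress verification more explicitly than the paper, which simply asserts it.
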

\begin{proof}
Consider two nodes $v_j$ and $v_k$ with distance $d\leq R$. We have seen that each subregion around a node depicted in Fig~\ref{sectorrouting}  contains at least a node with high probability. Now, we route starting from $v_j$ along the line $L$ connecting $v_j$ and $v_k$ by choosing a node from a sector which is closer to $r_k$ in a sector which in a corridor of width $2$ around $L$. We pick a node from this sector and observe that the messages advances by a distance of at least $\frac14$ in the direction towards $v_k$.
\begin{figure}[ht]
\centerline{\includegraphics[width=0.85\textwidth]{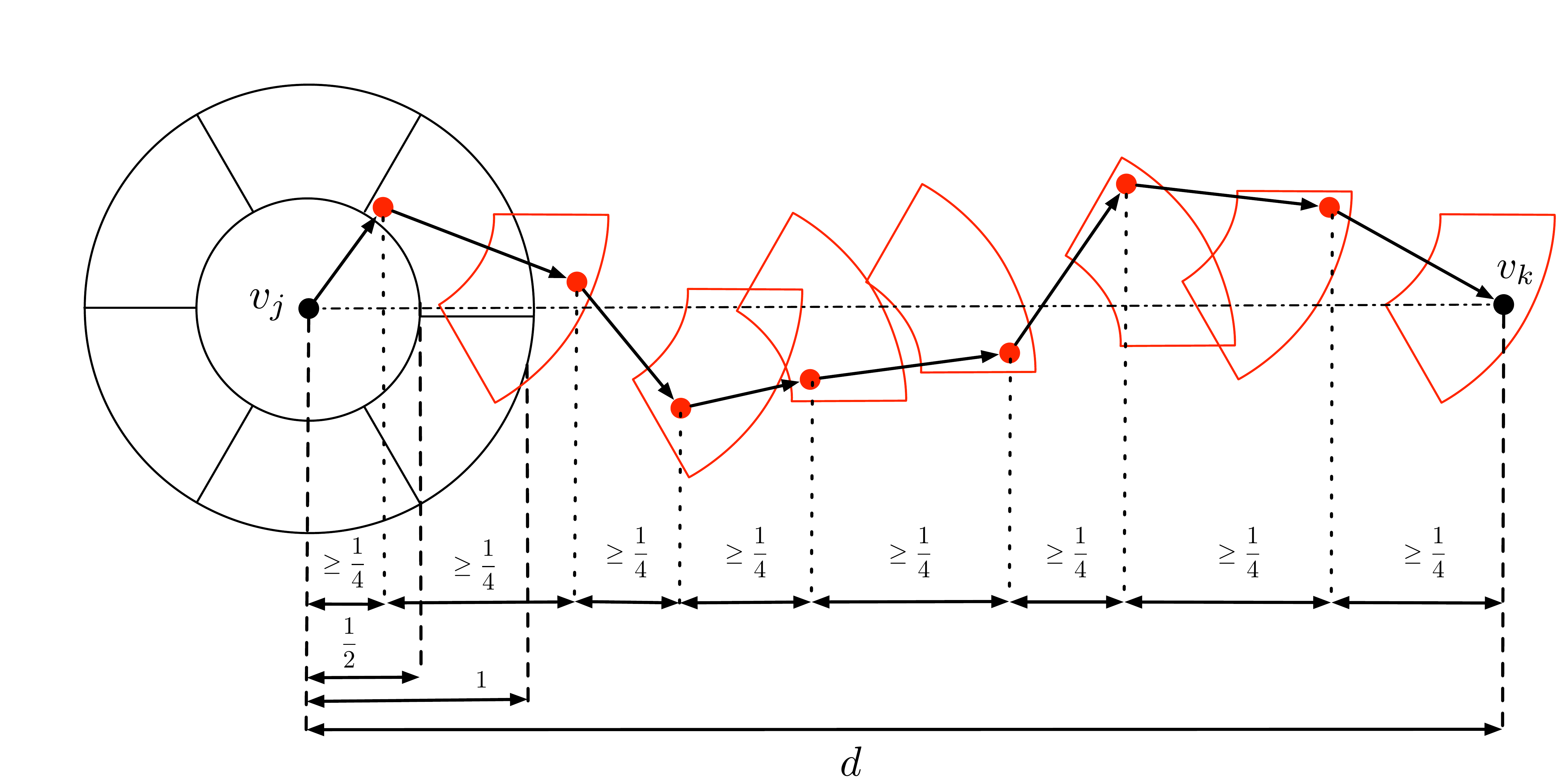}}
\caption{Routing from $v_j$ to $v_k$ using the unit-disk graph and non-empty sectors.\label{sectorrouting}}
\end{figure}

Hence, it takes at most $4R$ hops, where $R^2 = \frac{n}{\pi \rho}$.
\qed\end{proof}

\section{A Lower Bound for SNR Collaborative Broadcasting} \label{sec:SNR}

The  expected number of nodes $n(r)$ in a disk of radius $r$ around the origin is sharply concentrated around the expectation $\rho \pi r^2$, if it is at least logarithmic in~$n$, which follows from an application of Chernoff bounds.
\begin{lemma} \label{densifity}
For $n$ randomly distributed nodes in a disk of radius $R$ and a given smaller disk of radius $r$ within this disk, let $n(r)$ denote the number of nodes there within. Then we observe: 
\begin{eqnarray}
\Exp{n(r)} &=& 
\pi \rho r^2\ ,  \\
\Prob{n(r) \geq (1+c) \Exp{n(r)}}& \leq & e^{-\frac13 \min\{c, c^2\} \pi \rho r^2}
\ .  
\\ \Prob{n(r) \leq {\textstyle \frac12}  \Exp{n(r)}}& \leq & e^{-\frac18   \pi \rho r^2} \ . \label{dlower}
\end{eqnarray}
\end{lemma}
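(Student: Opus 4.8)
The plan is to recognize $n(r)$ as a binomial random variable and then apply standard multiplicative Chernoff bounds. First I would observe that each of the $n$ nodes is placed independently and uniformly in the disk of radius $R$, so for $r \le R$ the event that node $v_j$ lies in the concentric disk of radius $r$ has probability equal to the area ratio $p = \pi r^2/(\pi R^2) = r^2/R^2$. Writing $X_j = \mathbf{1}[\,|\!|v_j|\!|_2 \le r\,]$, the indicators $X_1, \ldots, X_n$ are independent Bernoulli$(p)$ variables with $n(r) = \sum_{j=1}^n X_j$. Linearity of expectation then gives $\Exp{n(r)} = np = n r^2/R^2 = \pi \rho r^2$, where the last step uses $\rho = n/(\pi R^2)$. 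This establishes the first identity and fixes the quantity $\mu := \Exp{n(r)} = \pi \rho r^2$ that appears in both tail exponents.

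For the upper tail I would invoke the multiplicative Chernoff bound $\Prob{n(r) \ge (1+c)\mu} \le \left(e^c/(1+c)^{1+c}\right)^{\mu}$. The only genuine work is to bound the base by $e^{-\min\{c,c^2\}/3}$, which I would do by splitting into the two regimes $0 < c \le 1$ and $c > 1$. In the first regime one verifies the elementary inequality $(1+c)\ln(1+c) - c \ge c^2/3$, and in the second the linear-growth inequality $(1+c)\ln(1+c) - c \ge c/3$; both follow from a short calculus argument on the difference function (monotonicity after checking the value and first derivative at the boundary). Substituting $\mu = \pi \rho r^2$ then yields exactly the claimed exponent $\tfrac13 \min\{c,c^2\}\,\pi \rho r^2$.

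The lower tail is more direct: the standard bound $\Prob{n(r) \le (1-\delta)\mu} \le e^{-\delta^2 \mu/2}$, specialized to $\delta = \tfrac12$, gives $\Prob{n(r) \le \tfrac12\mu} \le e^{-\mu/8} = e^{-\pi \rho r^2/8}$, which is precisely inequality~(\ref{dlower}).

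I do not anticipate a serious obstacle here, as this is a routine concentration estimate once the binomial structure is made explicit; there is no probabilistic subtlety because the node placements are independent by assumption. The only care required is the elementary verification of the two scalar inequalities bounding the Chernoff base in the upper-tail split, together with the bookkeeping confirming that $\mu = \pi \rho r^2$ lands correctly in each exponent.
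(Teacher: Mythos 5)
Your proposal is correct and takes essentially the same route as the paper: the paper likewise reformulates $n(r)$ as a sum of independent Bernoulli indicators with success probability $r^2/R^2$, obtains $\Exp{n(r)} = n\,r^2/R^2 = \pi\rho r^2$ from $\rho = n/(\pi R^2)$, and dispatches both tail inequalities with the remark that they ``follow by applying Chernoff bounds to $n(r)$.'' Your only addition is to make explicit the standard verifications (the scalar inequalities $(1+c)\ln(1+c)-c \geq \frac13\min\{c,c^2\}$ and the lower-tail bound with $\delta=\frac12$) that the paper leaves implicit.
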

\begin{proof}
We can reformulate $n(r)$ as the sum of the independent Bernoulli variables $X_i$, which denote $X_i=1$ when node $i$ falls into the smaller disk of radius $r$, and otherwise $X_i=0$. We have  $\Prob{X_i=1}= \frac{\pi r^2}{\pi R^2} = \frac{r^2}{R^2}$ and thus the expectation of $n(r)$ is the following.
$$\Exp{n(r)} = \Exp{\sum_{i=1}^n X_i} = n \frac{r^2}{R^2} = \frac{\pi \rho n r^2 }{n}=  \pi\rho r^2\ ,$$
using $\rho = \frac{n}{\pi R^2}$. The other inequalities follow by applying Chernoff bounds to $n(r)$.
\end{proof}

\begin{theorem} In the SNR-model for $\pi \rho \geq 1$ and $\rho =o(n)$  at least $\Omega\left(\frac{\log n}{\max\{1,\log \rho\}}\right)$ rounds are necessary to broadcast the signal to all $n$ nodes  with high probability.
\end{theorem}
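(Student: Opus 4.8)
The plan is to track the radius of the informed region and to show that this radius can grow by at most a factor that is a fixed power of $\rho$ per round; covering a disk of radius $R=\sqrt{n/(\pi\rho)}$ then forces geometrically many rounds, with base roughly $\sqrt{\rho}$. Let $r_t$ denote the largest distance from the origin of any node informed after round $t$, so that every sender active in round $t+1$ lies inside the disk of radius $r_t$. Because a single sender reaches exactly distance $1$ when $\beta N_0=1$ (the single-sender case, where the SNR model coincides with the UDG model), we have $r_1\le 1$.

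The heart of the argument is a one-round expansion bound. Consider a candidate node $q$ at distance $D>r_t$ from the origin. Every active sender $v_j$ satisfies $|\!|q-v_j|\!|_2\ge D-r_t$, and there are at most $n(r_t)$ of them, so in the SNR model (with $a_j=1$) the received energy obeys
\[ \sum_{j}\frac{1}{(|\!|q-v_j|\!|_2)^2}\ \le\ \frac{n(r_t)}{(D-r_t)^2}. \]
If $q$ is informed this sum is at least $\beta N_0=1$, whence $(D-r_t)^2\le n(r_t)$. Bounding $n(r_t)\le 2\pi\rho\, r_t^2$ via the upper tail of Lemma~\ref{densifity} then gives $D\le(1+\sqrt{2\pi\rho})\,r_t$, and therefore $r_{t+1}\le(1+\sqrt{2\pi\rho})\,r_t$. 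Iterating from $r_1\le1$ yields $r_t\le(1+\sqrt{2\pi\rho})^{\,t-1}$. Note that overestimating the number of senders by $n(r_t)$ only weakens the bound on $D$, which is the safe direction for a lower bound on the number of rounds.

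To conclude, I would invoke (as in the proof of the UDG lower-bound lemma) that with high probability some node lies at distance greater than $R-1$ from the origin, so the final round $T$ must satisfy $r_T>R-1$, that is
\[ T\ >\ 1+\frac{\log(R-1)}{\log(1+\sqrt{2\pi\rho})}. \]
Since $\pi\rho\ge1$, the denominator is $\Theta(\max\{1,\log\rho\})$, and $\log(R-1)=\tfrac12\bigl(\log n-\log(\pi\rho)\bigr)+o(1)$. For $\rho\le\sqrt n$ the numerator is $\Theta(\log n)$, giving $T=\Omega\!\left(\log n/\max\{1,\log\rho\}\right)$; for $\rho>\sqrt n$ the target bound is only $\Omega(1)$, which holds trivially since $T\ge1$. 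The hypothesis $\rho=o(n)$ guarantees $R\to\infty$, so $\log(R-1)=\Theta(\log R)$ is positive.

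The one genuinely delicate point I expect is that the inequality $n(r_t)\le2\pi\rho\,r_t^2$ must hold for the \emph{random} radii $r_1,\dots,r_T$ simultaneously. I would dispatch this by proving the uniform statement that $n(r)\le2\pi\rho\,r^2$ for all $r\in[1,R]$ with high probability: apply Lemma~\ref{densifity} on a geometric grid of radii (where $\pi\rho\,r^2=\Omega(\log n)$ makes each tail $n^{-\Omega(1)}$), take a union bound over the $\mathcal{O}(\log n)$ grid points, and interpolate to intermediate $r$ using monotonicity of $n(\cdot)$ at the cost of a harmless constant factor in front of $\pi\rho\,r^2$, which does not affect the order of $\log(1+\sqrt{2\pi\rho})$.
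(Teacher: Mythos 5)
Your skeleton is the same as the paper's: bound the one-round expansion of the informed disk by a multiplicative factor of order $\sqrt{\rho}$ via the energy inequality $n(r_t)/(D-r_t)^2 \geq \beta N_0 = 1$, iterate geometrically, and finish with the high-probability existence of a node at distance greater than $R-1$. Your case split at $\rho \leq \sqrt{n}$ versus $\rho > \sqrt{n}$ and the handling of the random radii by a deterministic envelope/grid are fine in spirit. However, there is a genuine gap in the low-density regime, which the theorem's hypothesis $\pi\rho \geq 1$ explicitly includes. Your uniform concentration claim --- $n(r) \leq 2\pi\rho r^2$ for \emph{all} $r \in [1,R]$ with high probability --- is false when $\pi\rho = o(\log n)$. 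The Chernoff tail from Lemma~\ref{densifity} at radius $r$ is only $e^{-\Theta(\pi\rho r^2)}$, which is $n^{-\Omega(1)}$ only when $\pi\rho r^2 = \Omega(\log n)$, i.e.\ $r = \Omega\bigl(\sqrt{\log n/(\pi\rho)}\bigr)$; your own parenthetical condition on the grid betrays this, since for constant $\rho$ the grid cannot reach down to $r=1$ where your induction starts. Concretely, for $\pi\rho = 1$ the count $n(1)$ has mean $1$, exceeds $2$ with constant probability, and can only be bounded by $O(\log n)$ with high probability --- so the multiplicative per-round bound $r_{t+1} \leq (1+\sqrt{2\pi\rho})\,r_t$ simply cannot be certified for the first rounds, and your recursion never gets off the ground in that regime.

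The repair is exactly the paper's second case: concede the early rounds by inflating the starting envelope to $r_1 = \sqrt{(k/(\pi\rho))\ln n}$, i.e.\ grant the adversary for free every node within radius $O\bigl(\sqrt{\log n/\rho}\bigr)$, where the expected count $k\ln n$ is large enough for the upper tail to be polynomially small; from there the geometric recursion $r_{j+1} = 4\sqrt{\rho}\, r_j$ is valid with high probability at every (now deterministic) envelope radius, and a union bound over the $O(\log n)$ rounds goes through. This costs only an additive $O(1)$ in the round count, and since $\log(R/r_1) = \tfrac12\bigl(\log n - \log(k\ln n)\bigr) = \Theta(\log n)$ for $\rho = o(n)$, the conclusion $T = \Omega\bigl(\log n/\max\{1,\log\rho\}\bigr)$ survives intact. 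In the complementary regime $\pi\rho \geq k\log n$ your argument is correct as written and coincides with the paper's first case, with your geometric grid making explicit a simultaneity issue the paper treats only implicitly through its deterministic radii $(4\sqrt{\rho})^t$.
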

\begin{proof}
We start with the center node in the middle of the disk and denote by $r_j$ the maximum distance of an informed node from the center of the disk. Let $n_j$ denote the number of informed nodes in round $j$. By definition $r_0 = 0$ and $n_0=1$. Then, in round one we have $r_1=1$ by applying the SNR model for one sender.

We consider two cases.
\begin{enumerate}
\item Case: $\pi \rho \geq k \log n$. 

Then, the expected number of nodes $n_1$ is $\pi \rho$ by Lemma~\ref{densifity} and for $\pi \rho \geq 1$ it is bounded as  $n_1\leq 2\pi \rho$ with high probability by  choosing $c=3k$. 
Consider a receiver in distance $d$ and assume for the lower bound argument that all nodes $n(r_j)$ in radius $r_j$ send the signal. Since $\pi \rho r_j^2 \geq k \log n$ we have $n(r_j) \geq \frac12 \pi \rho^2$ with high probability.
So, for $d  \geq 4 \sqrt{\rho} r$ and $\rho \geq 1$ we 
have $$d-r  
\geq 4 \sqrt{\rho} r-r  
\geq (4 \sqrt{\rho}-1)r
\geq (4 \sqrt{\rho}-\sqrt{\rho})r
 \geq 3 \sqrt{\rho} r 
 > \sqrt{2 \pi \rho} r \ .$$ 
 Then, the received energy is at most $\frac{n(r)}{(d-r)^2}$ where
$$ \frac{n(r)}{(d-r)^2} < \frac{n(r)}{2 \pi \rho r^2} \leq \frac{2 \pi \rho r^2}{2 \pi \rho r^2}  \leq 1\ \ ,$$ 
with high probability. 
So, no node farther away than $r_{j+1} = 4 \sqrt{\rho} r_j$ is informed in the SNR model in round $j$.

By induction only nodes in distance of at most $r_t = \left(4 \sqrt{\rho}\right)^t$ can be informed  after $t$ rounds with probability larger than $\frac1{n^{\mathcal{O}(1)}}$, which only can inform all nodes outside the disk of radius $R-1= \frac{n}{\pi \rho}-1$ if  
$ t\geq 
\Omega\left(\frac{\log n}{\log \rho}\right)$ for $\rho=o(n)$.

\item $\pi \rho\geq 1$ and $\pi \rho \leq k \log n$. 

For the proof we have to overcome the difficulty that the number of nodes in the unit disk may be too small to ensure high probability. We resolve this problem by overestimating the first radius $r_1=\sqrt{\frac{k}{\pi\rho} \ln n}$. Then, the expected number of nodes in this disk is $\Exp{n(r_1)} = \pi \rho r_1^2 = 3 k \ln n$ and $n(r_1)\geq 2 \Exp{n(r_1)}$ with small probability, i.e. $1/n^k$. 

Like in the first case we assume that in round $r_j$ all nodes in this radius send. So, for $d  \geq 4 \sqrt{\rho} r$ and $\rho \geq 1$ the received energy is less than $1$ within a distance of at most $r_{i+1} =  4 \sqrt{\rho} r_j$. 

Now the recursion is $$r_t = (4\sqrt{\rho})^{t-1} r_1 = (4\sqrt{\rho})^{t-1} \sqrt{\frac{k}{\pi\rho} \ln n} = 4 (4\sqrt{\rho})^{t-2} \sqrt{\frac{k}{\pi} \ln n} \ .$$

After $t$ rounds nodes in distance of at most $r_t$ can be informed, which can inform all nodes in the disk of radius $R= \frac{n}{\pi \rho}$ if  
$$ 4 (4\sqrt{\rho})^{t-2} \sqrt{\frac{k}{\pi} \ln n}  \geq R = \frac{n}{\pi \rho}\ ,$$ yielding 
$$  t \geq 2 + \frac{\log n - \frac12 \log \log n  - \frac12\log \pi - \log \rho + \log k +2}{2+ \frac12 \log \rho} = \Omega\left(\frac{\log n}{\max\{1,\log \rho\}}\right)$$ since $\rho=o(n)$ and $\rho\geq 1$.
\end{enumerate}
\end{proof}

\section{A Lower Bound for MIMO Collaborative Broadcasting} \label{sec:MISO_Lower}
If the unit length amplitudes of all senders in a disk of range $r$ are superpositioned,  in the best case this results in a received absolute amplitude proportional to the number of senders divided by the distance.
\begin{lemma}\label{misolowerbound}
Assuming that randomly placed senders are in a disk of radius $r$, then the maximum distance of a node which can be activated is at most 
$4\pi \rho r^2$ with high probability for  $\rho r^2 = \Omega(\log n)$.
\end{lemma}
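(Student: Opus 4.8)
The plan is to bound the best-case received amplitude at a far-away node and then combine the activation threshold with the density concentration of Lemma~\ref{densifity}. First I would recall, exactly as in the computation in the SNR-equivalence lemma, that the Fourier coefficient extracted at a receiver $q$ has the form $z = \sum_{j} b_j\, e^{i\sigma_j}$ with $b_j = 1/|\!|q-v_j|\!|_2$ (unit amplitudes $a_j=1$). Even when the senders coordinate their phases to align every term constructively---which is precisely the beamforming best case that a lower bound on broadcast speed must rule out---the triangle inequality for complex numbers gives
$$|z| \;\le\; \sum_{j} b_j \;=\; \sum_{j} \frac{1}{|\!|q-v_j|\!|_2}\ .$$
This is the crucial step: no choice of phases can make the magnitude of a sum exceed the sum of the magnitudes, so the coherent sum is the extremal case.

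Next I would place the receiver at distance $d$ from the center and use that all senders lie in the radius-$r$ disk, so by the triangle inequality $|\!|q-v_j|\!|_2 \ge d-r$ for every sender (in the regime $d>r$ that is relevant for bounding the maximum reach). Hence $|z| \le n(r)/(d-r)$, where $n(r)$ denotes the number of senders inside radius $r$. Activation requires $|z|^2/N_0 \ge \beta$, i.e.\ $|z|\ge 1$ under the normalization $\beta N_0 = 1$; combining this with the previous estimate yields $d-r \le n(r)$, that is $d \le n(r) + r$.

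Finally I would control $n(r)$ with high probability. Since $\rho r^2 = \Omega(\log n)$, Lemma~\ref{densifity} applied with $c=1$ gives $n(r) \le 2\pi\rho r^2$ except with probability at most $e^{-\frac13 \pi\rho r^2} = n^{-\Omega(1)}$. Because $\pi\rho \ge 1$ and $r = \Omega\!\big(\sqrt{(\log n)/\rho}\big)$, one checks that $r \le 2\pi\rho r^2$, so on this high-probability event
$$d \;\le\; n(r) + r \;\le\; 2\pi\rho r^2 + 2\pi\rho r^2 \;=\; 4\pi\rho r^2\ ,$$
which is the claimed bound.

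The main obstacle is conceptual rather than computational: identifying the triangle-inequality bound on the coherent sum as the correct notion of ``best case,'' and verifying that it alone suffices despite the adversarial phase coordination. A minor point to check is the far-field caveat---replacing a denominator $|\!|q-v_j|\!|_2 < c_f\lambda$ by $c_f\lambda$ only \emph{enlarges} the denominator and thus weakens each term, so it cannot increase $|z|$; moreover for a far receiver with $d-r \ge c_f\lambda$ no such replacement occurs, so the upper bound $|z| \le n(r)/(d-r)$ is unaffected.
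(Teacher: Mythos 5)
Your proposal is correct and takes essentially the same route as the paper's own proof: bound the best-case (coherently aligned) superposition by the triangle inequality, underestimate every sender--receiver distance by $d-r$, invoke the Chernoff bound of Lemma~\ref{densifity} to get $n(r) \le 2\pi\rho r^2$ with high probability, and conclude $d \le r + 2\pi\rho r^2 \le 4\pi\rho r^2$. Your additional checks---that the far-field denominator replacement cannot increase the bound, and that $r \le 2\pi\rho r^2$ in the stated density regime---merely make explicit two steps the paper leaves implicit.
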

\begin{proof} The expected number of senders in a disk of radius $r$ is $\pi \rho r^2$. Using Chernoff bounds and $\rho r^2 = \Omega(\log n)$ one can show that this number does not exceed $2 \pi \rho r^2$ with high probability.

Now, in the best case, all waves at a receiver $r$ perfectly add up resulting in a received signal of at most 
$\left|\hbox{rx}\right|\  \leq \  \sum_{i=1}^{2 \pi \rho r^2}
  \frac{1}{|\!|r-s_i|\!|_2}\ .$ We overestimate this signal by replacing the
  denominator with $d-r$, where $d$ is the distance of the receiver from the senders' disk's center. 
Hence, we receive a signal if $\left|\hbox{rx}\right|^2 = (2 \pi \rho r^2)^2 \geq  (d-r)^2$. So, we get $d \leq r+ 2\pi \rho r^2 \leq 4 \pi \rho r^2$.
\end{proof}
This Lemma implies the following lower bound.
\begin{corollary}
Any broadcast algorithm using MIMO needs at least $\Omega(\log \log n - \log \log \rho)$ rounds to inform all $n$ nodes with high probability.
\end{corollary}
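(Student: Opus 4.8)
The plan is to turn Lemma~\ref{misolowerbound} into a one-step recursion on the furthest radius that can be reached and then iterate it. Let $r_j$ be the largest distance from the center at which a node can conceivably be informed after round $j$. By the single-sender MIMO lemma one active node reaches exactly radius one, so $r_1 = 1$, and after round $j$ every informed node lies in the disk $D(0,r_j)$ centered at the origin. In round $j+1$ I would overestimate the broadcast by allowing \emph{every} node inside $D(0,r_j)$ to cooperate; Lemma~\ref{misolowerbound} then caps the farthest reachable node (measured from the origin, which is the center of this sending disk) at $r_{j+1} \le 4\pi\rho\, r_j^2$. Because $r_j \ge r_1 = 1$ and $\rho = \Omega(\log n)$, the hypothesis $\rho r_j^2 = \Omega(\log n)$ of that lemma holds at every step, so the recursion is legitimate throughout.

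Next I would linearize the recursion $r_{j+1} \le 4\pi\rho\, r_j^2$ with the substitution $u_j := 4\pi\rho\, r_j$: multiplying through by $4\pi\rho$ gives $u_{j+1} \le u_j^2$, and since $u_1 = 4\pi\rho$ this unwinds to the doubly-exponential bound
\[
u_t \ \le\ u_1^{\,2^{t-1}} \ =\ (4\pi\rho)^{2^{t-1}} .
\]
On the other hand, with high probability there is a node at distance at least $R-1$ from the center (the same outer-rim estimate already used in the UDG and SNR lower bounds), so informing \emph{all} nodes forces $r_t \ge R-1$, i.e.\ $u_t \ge 4\pi\rho\,(R-1)$. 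Using $R = \sqrt{n/(\pi\rho)}$, any broadcast algorithm that finishes by round $t$ must therefore satisfy
\[
(4\pi\rho)^{2^{t-1}} \ \ge\ 4\pi\rho\,(R-1) \ =\ \Omega\!\left(\sqrt{\rho n}\right).
\]

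Taking logarithms once gives $2^{t-1}\log(4\pi\rho) = \Omega(\log(\rho n)) = \Omega(\log n)$, and a second logarithm yields
\[
t \ \ge\ 1 + \log_2\!\left(\frac{\Omega(\log n)}{\log(4\pi\rho)}\right) \ =\ \Omega\!\left(\log\log n - \log\log\rho\right),
\]
where I use $\log(\rho n) = \Theta(\log n)$ (valid since $1 \le \rho = o(n)$) and $\log(4\pi\rho) = \Theta(\max\{1,\log\rho\})$; this last identity is exactly what turns $\log_2(\log n/\log\rho)$ into the advertised $\log\log n - \log\log\rho$ and keeps the bound consistent (non-negative, i.e.\ $\Omega(1)$) in the high-density regime where $\log\rho$ approaches $\log n$.

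The step I expect to be most delicate is not the algebra but the probabilistic bookkeeping. Lemma~\ref{misolowerbound} holds only with high probability and only under $\rho r_j^2 = \Omega(\log n)$, so I must verify that this hypothesis really persists along the entire recursion and that a union bound over the $O(\log\log n)$ rounds still leaves a failure probability of $n^{-\Omega(1)}$. Equally important is confirming that $r_{j+1} \le 4\pi\rho\, r_j^2$ is genuinely an upper bound on the reach of an \emph{arbitrary} cooperating set contained in $D(0,r_j)$, not just of a particular algorithm's set: this is where overestimating by letting the full disk transmit, combined with the Chernoff bound on the number of senders in $D(0,r_j)$, does the real work and makes the lower bound universal over all broadcast algorithms.
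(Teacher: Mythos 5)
Your proposal is correct and follows essentially the same route as the paper: the same Lemma~\ref{misolowerbound}, the same overestimating recursion $r_{j+1} \le 4\pi\rho\, r_j^2$, unwound to a doubly exponential bound and compared against the outer-rim radius $R-1$, yielding $\Omega(\log\log n - \log\log\rho)$. The only divergence is the base case --- the paper starts generously at $r_0 = \log n$ so the hypothesis $\rho r^2 = \Omega(\log n)$ holds already for $\rho \ge 1$, whereas you start at $r_1 = 1$ and invoke the standing assumption $\rho = \Omega(\log n)$ --- and your more explicit union-bound bookkeeping is a harmless (indeed welcome) elaboration of what the paper leaves implicit.
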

\begin{proof}
	We use Lemma~\ref{misolowerbound} by overestimating the effect of triggered nodes which are bound to disks with radii $r_j$. We assume that we start  with $r_0 = \log n$ for $\rho\geq 1$. Now, let $r_{j+1} = 4\pi \rho r_j^2$ denote the largest distance of a node in the next round.
	
So $r_j \leq  (4 \pi \rho \log n)^{2^j}$, which reaches $R-1=\sqrt{n/(\pi \rho)}-1$ at the earliest for some $j= \Omega(\log \log n - \log \log \rho)$. 
   \end{proof}

This claim also follows from the considerations in
\cite{JS13_Beamforming_Line} and \cite{JS14_Beamforming_LogLog_SSS} and
more extensive in \cite{diss-janson-2015} where
 a lower bound of $\Omega(\log \log n)$ rounds for the unicast problem has been shown. Here, we adapt this argument to include the density $\rho$.
 
\section{Expanding Disk Broadcasting} \label{sec:SNR_EDB}
For the SNR model a simple flooding algorithm works as well as the algorithm we propose. A straight-forward observation is a monotony property, i.e. every increase in sending amplitude and every additional sending node increases the coverage area. For the upper bound we use Algorithm~\ref{alg:EDB} which is slower, yet still asymptotically tight to the lower bound and easier to analyze. We choose $r_{j+1} = \frac14 \sqrt{\rho} r_j$, starting with $r_1=1$ and prove the following Lemma.

\begin{lemma}If $\rho= \Omega(\log n)$, then
in round $j\geq 1$ all nodes in distance $r_{j+1}$ from the origin have been informed with high probability.
\end{lemma}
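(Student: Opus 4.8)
The plan is to argue by induction on the round number $j$, exploiting the fact that in the SNR model the received signal energy $\hbox{\sl RS}=\sum_j 1/(\|q-v_j\|_2)^2$ is a \emph{deterministic} function of the node positions, so the only randomness that needs to be controlled is where the nodes fall. I would take as induction hypothesis $H(j)$ the statement that the $n(r_j)$ nodes within distance $r_j$ of the origin are all informed and enter round $j$ as unit-amplitude senders. The base case is that before the first collaborative round all nodes within $r_1=1$ have been informed by the single transmission of the center node $v_0$ (with $a=1$ and $\beta N_0=1$), which is exactly the single-sender characterization proved in the first Lemma. The conclusion of the round-$j$ step will be $H(j+1)$, which is precisely the assertion of the Lemma for round $j$.

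For the inductive step, fix any receiver $q$ at distance $d\le r_{j+1}=\tfrac14\sqrt{\rho}\,r_j$ from the origin. Every sender lies within distance $r_j$ of the origin, so the triangle inequality gives $\|q-v_j\|_2\le d+r_j$ for each sender, and hence $\hbox{\sl RS}\ge n(r_j)/(d+r_j)^2$. I would then invoke the density lemma (Lemma~\ref{densifity}): since $r_j\ge 1$ and $\rho=\Omega(\log n)$ we have $\pi\rho r_j^2=\Omega(\log n)$, so $n(r_j)\ge \tfrac12\pi\rho r_j^2$ holds with probability at least $1-e^{-\frac18\pi\rho r_j^2}$. Substituting this lower bound and $d\le \tfrac14\sqrt{\rho}\,r_j$ reduces reception, $\hbox{\sl RS}\ge 1=\beta N_0$, to the elementary inequality $(\tfrac14\sqrt{\rho}+1)^2\le \tfrac12\pi\rho$, which holds for all $\rho\ge 1$. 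Crucially, this energy estimate is uniform over all $q$ with $d\le r_{j+1}$, so once the single density event for radius $r_j$ holds, \emph{every} such node is informed; no union bound over receivers is required.

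It then remains to assemble the high-probability guarantee. The only bad events are the failures of the lower-tail density estimate, one per round, each of probability at most $e^{-\frac18\pi\rho r_j^2}\le e^{-\frac18\pi\rho}\le n^{-c}$ whenever $\rho\ge \frac{8c}{\pi}\ln n$. A union bound over the $\mathcal{O}(\log n/\log\rho)$ rounds needed for $r_j$ to exceed $R=\sqrt{n/(\pi\rho)}$ leaves the total failure probability at $n^{-\mathcal{O}(1)}$, so the whole inductive chain succeeds with high probability.

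I expect the only genuinely delicate points to be bookkeeping rather than deep. First, the base-case indexing must be pinned down so that round $j$ truly begins from a fully informed disk of radius $r_j$: the center's initial transmission has to be accounted for separately (essentially as a ``round $0$'') so that the first collaborative round already has the $\approx\pi\rho$ senders of the radius-$1$ disk available. Second, one must verify that the conservative growth factor $\tfrac14\sqrt{\rho}$ leaves enough slack in the crude triangle-inequality overestimate $d+r_j$; the margin in $(\tfrac14\sqrt{\rho}+1)^2\le\tfrac12\pi\rho$ is very comfortable for $\rho=\Omega(\log n)$ but becomes essentially tight as $\rho\to 1$, so the constant $\tfrac14$ cannot be raised much without replacing $d+r_j$ by a sharper distance bound.
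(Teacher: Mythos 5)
Your proof is correct and takes essentially the same route as the paper's: the lower-tail density bound of Lemma~\ref{densifity} plus a triangle-inequality estimate of the sender--receiver distance, yielding a constant SNR margin per round, with the induction and the union bound over rounds made explicit (the paper leaves these implicit). The only material difference is cosmetic but slightly favorable to you: your distance bound $d+r_j \le \left(\frac14\sqrt{\rho}+1\right) r_j$ is sharper than the paper's $r_j + r_{j+1} \le 2r_{j+1}$, which silently requires $r_j \le r_{j+1}$, i.e.\ $\rho \ge 16$, whereas your inequality $\left(\frac14\sqrt{\rho}+1\right)^2 \le \frac12\pi\rho$ holds (barely) for all $\rho \ge 1$.
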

\begin{proof}
Lemma~\ref{densifity} states that the expected number of nodes $n(r_j)$ in the disk of radius $r_j$ is $\rho \pi r_j^2$. Lemma~\ref{densifity} (\ref{dlower}) shows that $\Prob{n(r_i)\leq \frac12 \pi \rho r_j^2} \leq e^{-\frac18 \rho \pi r_j^2} \leq e^{-\frac18\rho}$, which is a small probability $1/n^c$ for $\rho= \Omega(\log n)$.

The maximum distance from any node in the disk of radius $r_{j+1}$ to a node in this disk is at most $r_j + r_{j+1} \leq 2 r_{j+1}$. Hence, the received signal has an expected SNR of at least 
$$
\frac{n(r_j)}{(2r_{j+1})^2} \geq 
\frac{\frac12 \rho \pi r_j^2}{(2r_{j+1})^2} = \frac{\frac12 \rho \pi r_j^2}{(2 \frac14 \sqrt{\rho} r_j)^2}  = 2\pi > \beta = 1 \ .$$
\end{proof}
\SetEndCharOfAlgoLine{}

\begin{algorithm}
\caption{Expanding Disk Broadcast}
\label{alg:EDB}
\SetKwProg{generate}{Algorithm \emph{Expanding Disk Broadcast}}{}{end}

\generate{}{
      Sender $v_0$ starts sending \;
      $j \leftarrow 1$ \;
       \While{$r_j<R$}{
     \For{\textbf{all}  $v\in\{v_1,\ldots, v_n\}$ which are informed and where $|\!|v - v_0 |\!|_2 \leq r_j$}{
    Node $v$ starts sending \;
      }
      $j\leftarrow j+1$\;
     }
}
\end{algorithm}
Therefore $r_j= (\rho/16)^{(j-1)/2}$ and  for 
$j
\geq 1+ 2\frac{\log  n - \log( \pi \rho)}{(\log \rho) - 4}
= \Theta(\log n/\log \rho)$ we have $r_j \geq R$ and all nodes are informed. 
\begin{corollary} \label{co:SNR_log}
In the SNR-model collaborative broadcasting needs $ \mathcal{O}(\log n/\log \rho)$ rounds for $\rho > 16$, if broadcasting starts with at least $\Omega(\log n)$ nodes, or $\rho=\Omega(\log n)$.
\end{corollary}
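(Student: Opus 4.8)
The plan is to read Corollary~\ref{co:SNR_log} as a packaging of the Expanding Disk Broadcasting Lemma together with the recursion already set up immediately before the statement. The Lemma guarantees that, whenever the disk of radius $r_j$ contains at least $\frac12\pi\rho r_j^2$ nodes, a single round of Algorithm~\ref{alg:EDB} extends the informed region from radius $r_j$ to $r_{j+1}=\frac14\sqrt{\rho}\,r_j$. Solving this one-term linear recursion gives $r_j=(\rho/16)^{(j-1)/2}$, which is strictly increasing exactly when $\rho>16$; this is where that hypothesis enters. Imposing $r_j\geq R=\sqrt{n/(\pi\rho)}$ and taking logarithms then yields the round count $j=\Theta(\log n/\log\rho)$, precisely the display preceding the corollary, and the $\mathcal{O}$-bound follows from the upper half of that $\Theta$.

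The two hypotheses correspond to two ways of guaranteeing that the density concentration of Lemma~\ref{densifity}, inequality~(\ref{dlower}), holds in \emph{every} round. First I would treat $\rho=\Omega(\log n)$: here $\rho r_1^2=\rho=\Omega(\log n)$ already at $r_1=1$, so $e^{-\frac18\pi\rho r_j^2}\leq e^{-\frac18\pi\rho}\leq n^{-c}$ and the Lemma applies verbatim from the first round; moreover $\rho=\Omega(\log n)$ forces $\rho>16$ for large $n$, so the radius genuinely grows. For the second regime, $\rho>16$ but possibly constant, with the broadcast seeded by $\Omega(\log n)$ informed nodes, I would mimic Case~2 of the SNR lower bound theorem: replace $r_1=1$ by a seed radius large enough that $\pi\rho r_1^2=\Omega(\log n)$, so the concentration bound is again below $n^{-c}$, while $\sqrt{\rho}/4>1$ still multiplies the radius each round.

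I would close the high-probability claim with a union bound: the process runs for $\mathcal{O}(\log n/\log\rho)=\mathcal{O}(\log n)$ rounds, each failing (its relevant disk being underpopulated) with probability at most $n^{-c}$, so all rounds succeed simultaneously with probability at least $1-\mathcal{O}(\log n)\,n^{-c}=1-n^{-\mathcal{O}(1)}$ after enlarging the constant $c$ in $\rho\geq\frac{8}{\pi}c\ln(n+1)$.

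I expect the main obstacle to be the bootstrapping in the second regime rather than the recursion itself: when $\rho$ is merely a constant exceeding $16$, a unit seed disk holds only $\mathcal{O}(1)$ nodes, so concentration can fail and the very first expansion is not guaranteed. The assumption of $\Omega(\log n)$ initial nodes is exactly what repairs this, by supplying a seed disk that already meets the $\Omega(\log n)$ threshold Lemma~\ref{densifity} needs; the one detail requiring care is checking that this enlarged lead-in changes only the additive constant and leaves the $\Theta(\log n/\log\rho)$ round count intact, which holds because the recursion $r_j=(\rho/16)^{(j-1)/2}$ reaches $R$ in the same logarithmic-in-$n$ number of steps regardless of the constant-factor head start.
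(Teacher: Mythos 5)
Your proposal is correct and follows the paper's own route exactly: the corollary is just the Expanding Disk lemma combined with the recursion $r_j=(\rho/16)^{(j-1)/2}$ solved against $R=\sqrt{n/(\pi\rho)}$, with the hypothesis $\rho>16$ entering precisely to make the radii grow. You in fact supply two details the paper leaves implicit---the seed-radius bootstrap with $\pi\rho r_1^2=\Omega(\log n)$ for the constant-density regime started from $\Omega(\log n)$ informed nodes, and the union bound over the $\mathcal{O}(\log n)$ rounds---and both are the right way to fill those gaps.
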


We conjecture that the result of Corollary~\ref{co:SNR_log} not only holds for  our (line-of-sight, path loss exponent 2) SNR model but also holds for the model proposed in \cite{NGS09_Linear_Capacity_Beamforming,ozgur2007hierarchical} where the path loss exponent is $\alpha \le 2$. Then, the channel from sender $v_j$ to receiver $v_k$ has an contribution of $s_j(t) h_{j,k}(t)  $ for emitted signal $s_j(t)$ and $h_{j,k}(t) = |\!|v_k-v_j|\!|_2^{-\alpha/2} \cdot e^{i\cdot \theta_{j,k}(t)}$ with random phase shift $\theta_{j,k}(t)$ at time $t$. We  discuss further conjectures about the influence of the path loss factor in the Outlook.

The transmission range grows in each round exponentially with factor $r_j / r_{j-1} = \sqrt{\rho/16}$. In order to minimize this term, we define the radii downwards. Let $p = \lceil2\log R/(\log \rho/16)\rceil$ denote the last round and define $r'_p := R = \sqrt{\frac{n}{\rho \pi}}$. Then, 
define $r'_{j-1} := r'_j/\sqrt{\rho/16}$. Note that $r'_j \leq r_1 =1$ and therefore $r'_2\leq r_2$ can be informed in one step.

So, the overall time $T$ is mostly determined by the speed of light:
\begin{eqnarray*}
T &\le& \sum_{j=1}^{p} r'_j  
\ \le\  R \left( 1 + \sum_{j=1}^{p-1} \frac{1}{(\log \rho/16)^{j/2}}\right)
= R \left(1 + \mathcal{O}\left(\frac{1}{\sqrt{\log \rho}}\right)\right)  \ .
\end{eqnarray*}
Since we assume that $\rho= \Omega(\log n)$ we have in this setting the following corollary.
\begin{corollary}\label{c:snrlight}
The speed of the SNR Broadcast approaches the speed of light for growing $n$.
\end{corollary}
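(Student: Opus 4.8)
The plan is to turn the informal phrase ``approaches the speed of light'' into a precise statement about the average radial propagation speed, and then to sandwich that speed between $c$ and $c(1-o(1))$. Concretely, I would define the average broadcast speed as $v := R/T$, where $R$ is the radius the information front must cover and $T$ is the total physical time until the last node is informed, with the normalization $c=1$. The whole corollary then reduces to showing $v \to c$ as $n\to\infty$, which I would obtain from a lower bound on $v$ supplied by the time estimate displayed just before the corollary, together with a trivial causal upper bound.

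First I would record the causal ceiling. The first transmission originates at $v_0$ at the origin at time $0$, and any information propagates causally at speed at most $c=1$; moreover, with high probability there is a node in the outer rim at distance $R\,(1-o(1))$ from the origin (the same outer-rim argument used in the UDG lower bound). Such a node cannot be informed before time $R\,(1-o(1))$, so $T \ge R\,(1-o(1))$ and hence $v = R/T \le 1/(1-o(1)) = 1 + o(1)$.

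Next I would establish the matching lower bound by invoking the estimate obtained above from the downward-defined radii $r'_j$, with $r'_p = R$ and ratio $r'_j/r'_{j-1} = \sqrt{\rho/16} > 1$, valid since $\rho > 16$. Because these radii form a geometric sequence whose largest term is the final one, $R$, the overestimate $T \le \sum_{j=1}^{p} r'_j$ collapses to $T \le R\,\bigl(1 + \mathcal{O}(1/\sqrt{\log \rho})\bigr)$, with the sum of all earlier rounds forming a convergent geometric tail of lower order than $R$. This yields $v = R/T \ge 1/\bigl(1 + \mathcal{O}(1/\sqrt{\log \rho})\bigr) = 1 - o(1)$.

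Combining the two bounds gives $1 - o(1) \le v \le 1 + o(1)$. Since the standing hypothesis $\rho = \Omega(\log n)$ forces $\rho \to \infty$ as $n \to \infty$, the error terms vanish and $v \to 1 = c$, which is the claim. The step I expect to be the main obstacle is not the limit but pinning down the leading constant: to conclude convergence to $c$ itself, rather than merely to some constant multiple of $c$, one must verify that the final round alone accounts for essentially all of $T$, i.e. that each earlier round contributes only a geometrically damped fraction $(16/\rho)^{k/2}$ of $R$. This is precisely where the exponential schedule growth factor $\sqrt{\rho/16}$, and thus the density assumption $\rho > 16$, becomes essential; without it the tail would not be negligible and the speed would saturate below $c$.
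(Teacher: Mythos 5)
Your proposal is correct and takes essentially the same route as the paper, which proves the corollary precisely via the downward-defined geometric schedule $r'_p := R$, $r'_{j-1} := r'_j/\sqrt{\rho/16}$ and the bound $T \le \sum_{j=1}^{p} r'_j \le R\left(1 + \mathcal{O}\left(1/\sqrt{\log \rho}\right)\right)$, concluding since $\rho = \Omega(\log n) \to \infty$. Your only additions are cosmetic formalizations the paper leaves implicit --- the causal ceiling $T \ge R\,(1-o(1))$ via the outer-rim node, making the limit two-sided --- and your geometric tail ratio $(16/\rho)^{j/2}$ is in fact the correct one for the stated schedule (the paper's displayed $(\log \rho/16)^{j/2}$ appears to be a typo).
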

However, the time effort for decoding and encoding at a node is for practical applications usually much larger than the time caused by the speed of light, the main factor is the number of times the signals have to be relayed, i.e. the number of rounds.

\section{MIMO} \label{sec:MISO}

In MIMO we only analyze the expanding broadcasting algorithm, since the coverage area is far from convex nor does every additional sending node help, see Fig.~\ref{figspikes}. We use a start radius $r_1= c_{2}/\lambda$ and the expansion $ r_{j+1} = c_1  \rho r_j^{3/2}\lambda^{1/2}$ for a constant $c_1>0$ to be defined later. 
\begin{figure}[ht]
\centerline{\includegraphics[width=1\linewidth]{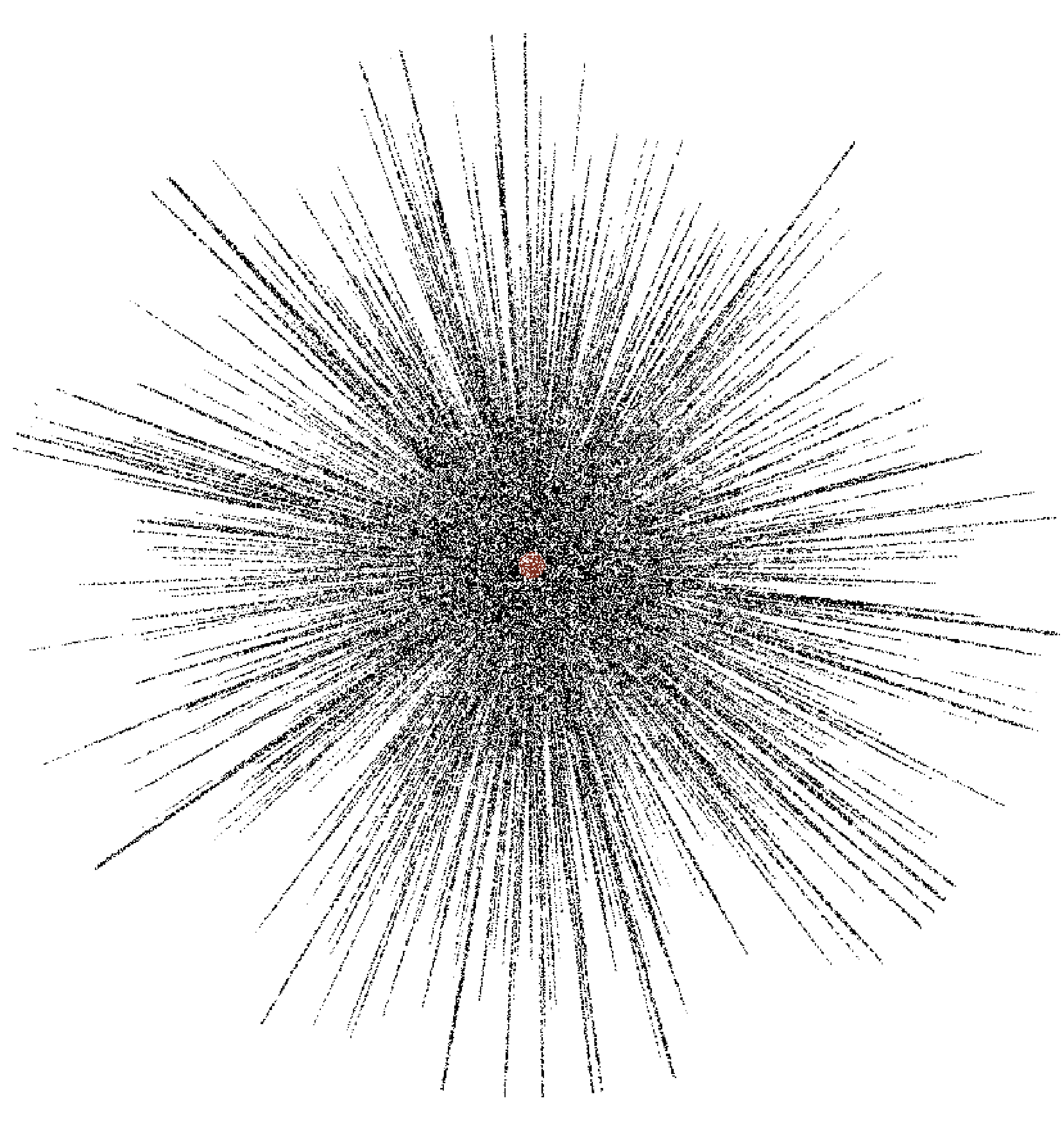}}
\caption{Randomly placed nodes reached by a cooperative broadcast in the MISO model from the set of senders randomly placed in the red disk.\label{figspikes}}
\end{figure}

Informing all nodes in radius $r_1= c_{2}/\lambda$ is done by a different algorithm, since the MISO broadcast does not work here. For constant $\lambda$ this can be done with single sender broadcasts (resulting in the UDG broadcast). For $\lambda=o(1)$ the number of rounds in the first phase may have a significant impact. 

In the subsequent MIMO rounds, the senders $v_k$ are synchronized with a phase shift $\varphi_\ell = - 2 \pi |\!|v_\ell - v_0|\!|_2 / \lambda$ such that the resulting signal of $v_\ell$ is $e^{i (2 \pi t / \lambda + \varphi_\ell)}$. These phases try to imitate the pattern of single sender in the center, the energy of which grows double exponentially in each round. 

For the analysis we consider only the signal strength at one receiver and analyze whether MISO works for this sender. We prove that the SNR ratio of the collaborative broadcast signal at every receiver is above the threshold with high probability. So,  MISO with high probability results in MIMO with high probability for all receivers in the next disk rim.

\begin{algorithm}
\caption{MISO Broadcast}
\label{alg:MISOB}
\SetKwProg{generate}{Algorithm \emph{MIMO Broadcast}}{}{end}

\generate{}{
       Inform all nodes in the disk of radius $15 r_1$\;
       $j \leftarrow 1$ \;
       \While{$r_j<R$}{
     \For{\textbf{all}  $v\in\{v_1,\ldots, v_n\}$ which are informed and where $|\!|v - v_0 |\!|_2 \leq r_j$}{
    Node $v$ starts sending with phase shift $\varphi = - 2 \pi |\!|v - v_0|\!|_2 / \lambda$ \;
      }
      $j\leftarrow j+1$\;
     }
}
\end{algorithm}

Recall that the density is defined as $\rho= \frac{n}{\pi R^2}$ and let $\rho \geq c_{3} \log n$. 

\begin{theorem}\label{safety}
For constant wavelength $\lambda$, density $\rho =  \Omega(\log n)$  every receiver in distance $d$ 
can be triggered with high probability, if $ 15r_j  \leq d  \leq  c_1 \rho r_j^{3/2}\lambda^{1/2} $, for a constant $c_1$.
\end{theorem}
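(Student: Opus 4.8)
The plan is to fix a receiver $q$ at distance $d=|\!|q-v_0|\!|_2$ and lower bound the modulus of its Fourier coefficient $z$. Introduce the path-length excess $\Delta_\ell := |\!|q-v_\ell|\!|_2 + |\!|v_\ell - v_0|\!|_2 - d$ of sender $v_\ell$. The chosen phase shifts $\varphi_\ell = -2\pi|\!|v_\ell-v_0|\!|_2/\lambda$ make the contribution of $v_\ell$ equal to $\frac{1}{|\!|q-v_\ell|\!|_2}e^{-i2\pi(d+\Delta_\ell)/\lambda}$, so after factoring out the common phase $e^{-i2\pi d/\lambda}$ and bounding $|z|$ from below by the de-rotated real part,
\[
|z| \;\ge\; \Re\!\big(e^{\,i 2\pi d/\lambda}\,z\big) \;=\; \sum_{\ell\,:\,|\!|v_\ell-v_0|\!|_2\le r_j} \frac{\cos(2\pi\Delta_\ell/\lambda)}{|\!|q-v_\ell|\!|_2}\, .
\]
It then suffices to show that this sum is at least $1=\sqrt{\beta N_0}$ with high probability. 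The triangle inequality yields $0\le\Delta_\ell\le 2r_j$, and the hypothesis $d\ge 15r_j$ guarantees $|\!|q-v_\ell|\!|_2\in[d-r_j,d+r_j]$, so every summand has magnitude $\Theta(1/d)$ and, using $r_j\ge r_1=c_2/\lambda$ with $\lambda<1$, the far-field regime is respected.

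The core is to show that, over the uniform placement of the informed senders in the disk of radius $r_j$, this de-rotated sum has expectation of the right order and positive sign. I would first analyze the level sets of $\Delta$: the region $\{v:\Delta(v)\le\delta\}$ is the interior of a confocal ellipse with foci $v_0$ and $q$, and near the focus $v_0$ it is a parabolic sliver whose intersection with the disk of radius $r_j$ has area $\Theta(r_j^{3/2}\delta^{1/2})$. Decomposing the disk into the confocal elliptical shells $Z_k=\{k\lambda/2\le\Delta<(k+1)\lambda/2\}$, on each of which $\cos(2\pi\Delta/\lambda)$ keeps the sign $(-1)^k$, the number of senders per shell is proportional to the shell area and decreases like $k^{-1/2}$. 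The resulting contributions form an alternating, decreasing series dominated by its first positive term, giving
\[
\Exp{\sum_{\ell} \frac{\cos(2\pi\Delta_\ell/\lambda)}{|\!|q-v_\ell|\!|_2}} \;=\; \Theta\!\left(\frac{\rho\, r_j^{3/2}\lambda^{1/2}}{d}\right)=:\mu\, .
\]

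With the expectation in hand I would establish concentration. The summands $Y_\ell=\cos(2\pi\Delta_\ell/\lambda)/|\!|q-v_\ell|\!|_2$ are independent over the placement of the $N=\Theta(\rho r_j^2)$ senders (their count being $\Theta(\rho r_j^2)$ with high probability by Lemma~\ref{densifity}) and bounded by $\Theta(1/d)$, so Hoeffding's inequality gives a tail $\exp(-\Omega((\mu d)^2/N))$. Since $(\mu d)^2/N=\Theta(\rho r_j\lambda)$ and $r_j\lambda\ge c_2$, this exponent is $\Omega(\rho)=\Omega(\log n)$, so the de-rotated real part stays within a factor $2$ of $\mu$ with high probability. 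Choosing the constant $c_1$ so that $d\le c_1\rho r_j^{3/2}\lambda^{1/2}$ forces $\mu\ge 2$, we then obtain $\Re(e^{\,i2\pi d/\lambda}z)\ge 1$ and hence $|z|\ge 1$, i.e.\ successful reception, with high probability.

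I expect the expectation computation to be the main obstacle. Because $r_j\gg\lambda$, the phases $2\pi\Delta_\ell/\lambda$ wrap around many times, so a crude bound permits total cancellation; the positivity and the precise $\Theta(\rho r_j^{3/2}\lambda^{1/2}/d)$ order must be extracted from a genuine oscillatory cancellation. Making the Fresnel-shell argument rigorous requires proving that the elliptical shell areas are monotonically decreasing, so that the alternating series is controlled by its first term, and that both the deviation of the true level-set geometry from the parabolic approximation near $v_0$ and the outermost shells near $\Delta\approx 2r_j$ contribute only lower-order terms. Equivalently, one must estimate the oscillatory integral $\int_0^{2r_j}\cos(2\pi\delta/\lambda)\,f_\Delta(\delta)\,d\delta$ against the square-root singularity $f_\Delta(\delta)=\Theta((\delta r_j)^{-1/2})$ of the excess density at the origin; this endpoint contribution is exactly what produces the $\lambda^{1/2}$ factor and the $r_j^{3/2}$ scaling of the reachable radius, and it is the heart of the proof rather than the concentration step.
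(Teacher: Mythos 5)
Your proposal follows essentially the same route as the paper's proof: align phases to reduce everything to the path-length excess $\Delta$, decompose the sender disk into confocal elliptical Fresnel shells whose first zone has area $\Theta(r_j^{3/2}\lambda^{1/2})$, control the alternating cancellation via monotone decay of the shell contributions to get an expected de-rotated signal of order $\rho r_j^{3/2}\lambda^{1/2}/d$, and then apply Hoeffding with exponent $\Theta(\rho\lambda r_j)=\Omega(\log n)$ using $r_j\geq c_2/\lambda$ --- exactly the paper's structure, up to your working with the real part where the paper uses the imaginary part. The one step you defer, monotone control of the oscillatory integral against the square-root singularity of the excess density, is precisely what the paper establishes through the area function $f(w,d)$ and the computer-assisted interval-arithmetic bounds $f''(x,y)\leq -\frac18$ and $f'(x,d)\geq\frac75 f'(x+\lambda/2,d)$, so you have correctly identified the heart of the argument.
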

\begin{proof}
We consider an arbitrary node $q$ in distance $d$ from the first sender $v_0$ in the center.
We  prove that this node is triggered with high probability and thus all receivers in this distance will be triggered likewise with this probability. 

First we analyze the expected received signal of a receiver in distance $d$, which is given by an integral. The complex value of this integral will be asymptotically estimated using a geometric argument over the intersection of ellipses with equal phase shift impact and the sender disk.

In this proof $m+1$ denotes the number of triggered senders in a disk of radius $r$. Sender $v_0$ resides at $(0,0)$. The other $m$ senders are uniformly distributed in the disk of radius $r$. We investigate the received signal $\hbox{\sl rx}$ at a receiver $q$ outside of this disk with distance $d\geq r$. Wlog. we assume $q$ lies on the $x$-axis.

Define for $p=(p_x,p_y)$:
\begin{equation}
\Delta_d(p) := \sqrt{p_x^2+p_y^2}+ \sqrt{(d-p_x)^2+p_y^2}-d \ . \end{equation}
Using this notion the  received signal strength is given by

\begin{lemma}
For  $0\leq w \leq \tau+\lambda/2$ and senders $v_1, \ldots, v_n \in D(v_0,r)$ the received signal is given as
$\hbox{\sl rx}(t)= \hbox{\sl rx} \cdot 
e^{i 2 \pi (t- d)/\lambda}$, where $\hbox{\sl rx} =
\sum_{j=1}^n \frac{ e^{-i 2 \pi
 \Delta_d(v_j)/{\lambda}} }{|\!|q-v_j|\!|_2}  \ .$
\end{lemma}
We will estimate the absolute value $|\hbox{\sl rx}|$ as follows.

First, we see that there is an easy characterization by ellipses $E_{\tau}$ with focal points in $v_0$ and $q$, which characterize whether senders help or interfere, see Fig.~\ref{oak-senders}. The parameter $\tau$ describes the phase at which the sender's signal arrives at the receiver $q$. The main contributor to the received signal comes from the area within $E_{\tau}$ which has an area of $\Theta(r^{3/2} \lambda^{1/2})$, which corresponds to the innermost dark ellipse in Fig.~\ref{oak-senders}. The other areas more or less cancel themselves out.
 
 \begin{figure}[ht]
\centerline{\includegraphics[width=0.8\linewidth]{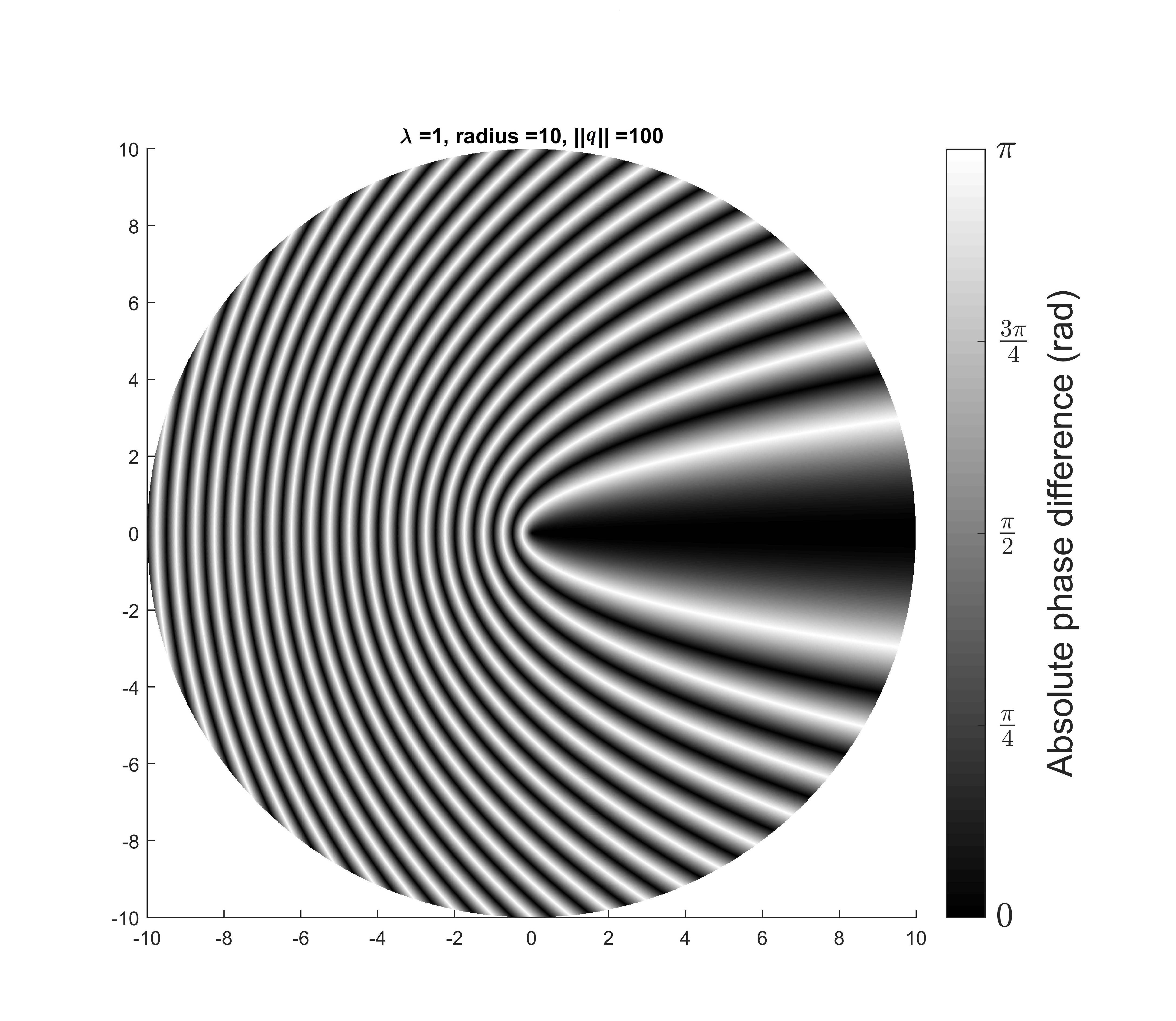}}
\caption{Senders in a disk of radius 10, colored according to the phase difference perceived by a receiver located at point (100, 0) for wavelength~1 wavelength $\lambda=1$ \cite{oakmasterthesis}.\label{oak-senders}}
\end{figure}
 
To prove this, we give a formula which describes exactly the expected signal at a given point $t_0$. This expectation will be estimated by carefully chosen bounds. We denote by $D_r$ the disk with center $(0,0)$ and radius $r$.

Consider the ellipse $E_{\tau}$ with focal points $v_0=(0,0)$ and $q= (d,0)$:
$$ E_{\tau} := \{ p \in \REAL^2 \mid |\!|p|\!|_2 + |\!|p-q|\!|_2 = d+ \tau \} \ .$$
We will use the following notation do describe all points inside this ellipse:
$$ E_{\leq \tau} := \{ p \in \REAL^2 \mid |\!|p|\!|_2 + |\!|p-q|\!|_2 \leq d+ \tau \} \ .$$
\begin{lemma}
$
\Exp{\hbox{\sl rx}} 
= \frac{1}{d} +  
\frac{m}{2\pi r^2}
\iint\displaylimits_{(x,y) \in  D_r} 
\frac{e^{- i \Delta_d(x,y) 2\pi/\lambda} \ \mathrm{d} x \ \mathrm{d} y}{\sqrt{(x-d)^2+y^2}} \ .
$\end{lemma}

In order to estimate this expectation we introduce the following complex valued functions for the
 disk $D_r = \{p \in \REAL^2 \mid |\!|p|\!|_2 \leq r\}$.
\begin{eqnarray}
s({d,\lambda,r}) &:=&\iint\displaylimits_{(x,y) \in  D_r} 
\frac{e^{- i \Delta_d(x,y) 2\pi/\lambda} \ \mathrm{d} x \ \mathrm{d} y}{\sqrt{(x-d)^2+y^2}} 
\end{eqnarray}
Since the maximum phase shift in the disk of radius $r$ is at point $(-r,0)$ with $\Delta_d(-r,0) = 2r$ we 
have the following relationship between these functions $s({d,\lambda,r}) = h_{d,\lambda,r}(2r)$. Furthermore,
there is the following linearity within $h$:
\begin{equation} \forall c>0: s({d,\lambda,r}) = \frac1c\ s({c d,c \lambda,c r})\end{equation}

Note that the following relationship holds.
\begin{eqnarray*} 
\Exp{\hbox{\sl rx}} &=&   \frac1d + \frac{m}{2\pi r^2} s(d,\lambda,r) \\
&=&  \frac1d + \frac{m}{2\pi r}s(d/r,\lambda/r,1) \ .
\end{eqnarray*}
Now the estimation of the signal is based on the following lemma.
\begin{lemma}
\label{achtvier}
For $w\geq \pi$,  $\lambda \leq 2$ and $d>15$:
\begin{eqnarray} 
\Im( s(d,\lambda,1)) &\geq& \frac{9}{2,240\sqrt2} \frac{\sqrt{\lambda}}{d+1} \ . 
\end{eqnarray}\end{lemma}
\begin{proof}
We need the following definitions, where $h_{\infty,\lambda}(w)$ is used to estimate the signal energy in the far-distance case. We first estimate its size and then apply it to $s(d,\lambda,r)$. Note that $U=D_1$.
\begin{eqnarray}
h_{d,\lambda}(w) &:=&  \iint\displaylimits_{(x,y) \in E_{\leq w} \cap U} 
e^{i \Delta_d(x,y) 2\pi/\lambda}\  \mathrm{d} x \ \mathrm{d} y\ , \\
h_{\infty,\lambda}(w) &:=& \iint\displaylimits_{(x,y) \in E_{\leq w} \cap U} 
e^{i \Delta_{\infty}(x,y) 2\pi/\lambda} \ \mathrm{d}x \ \mathrm{d}y \ .
\end{eqnarray}
where $
\Delta_d(x,y) := \sqrt{x^2+y^2}+ \sqrt{(d-x)^2+y^2}-d$ and $\Delta_{\infty}(x,y) = \sqrt{x^2+y^2} -x $.

We use a geometric argument and concentrate on the area of the intersection of the unit disk $U$ and the $E_{\leq w}$ described by
$f(w,d)= \iint\displaylimits_{(x,y)  \in E_{\leq w} \cap U} 1 \  \mathrm{d} x \ \mathrm{d} y 
$. Using the following function for the area of the segment of depth $z$ of an ellipse with radii $r_1$ and $r_2$ we derive a closed form for this function, see Fig.~\ref{segment}.
\begin{eqnarray*} 
g(x) & := & \arccos\left(1-x\right)-\left(1-x\right) \sqrt{1-(1-x)^2}\\
s(r_1,r_2,z) &=& r_1 r_2\ g\left(\frac{z}{r_1}\right) 
\end{eqnarray*}
\begin{figure}[ht]
\centerline{\includegraphics[width=0.35\linewidth]{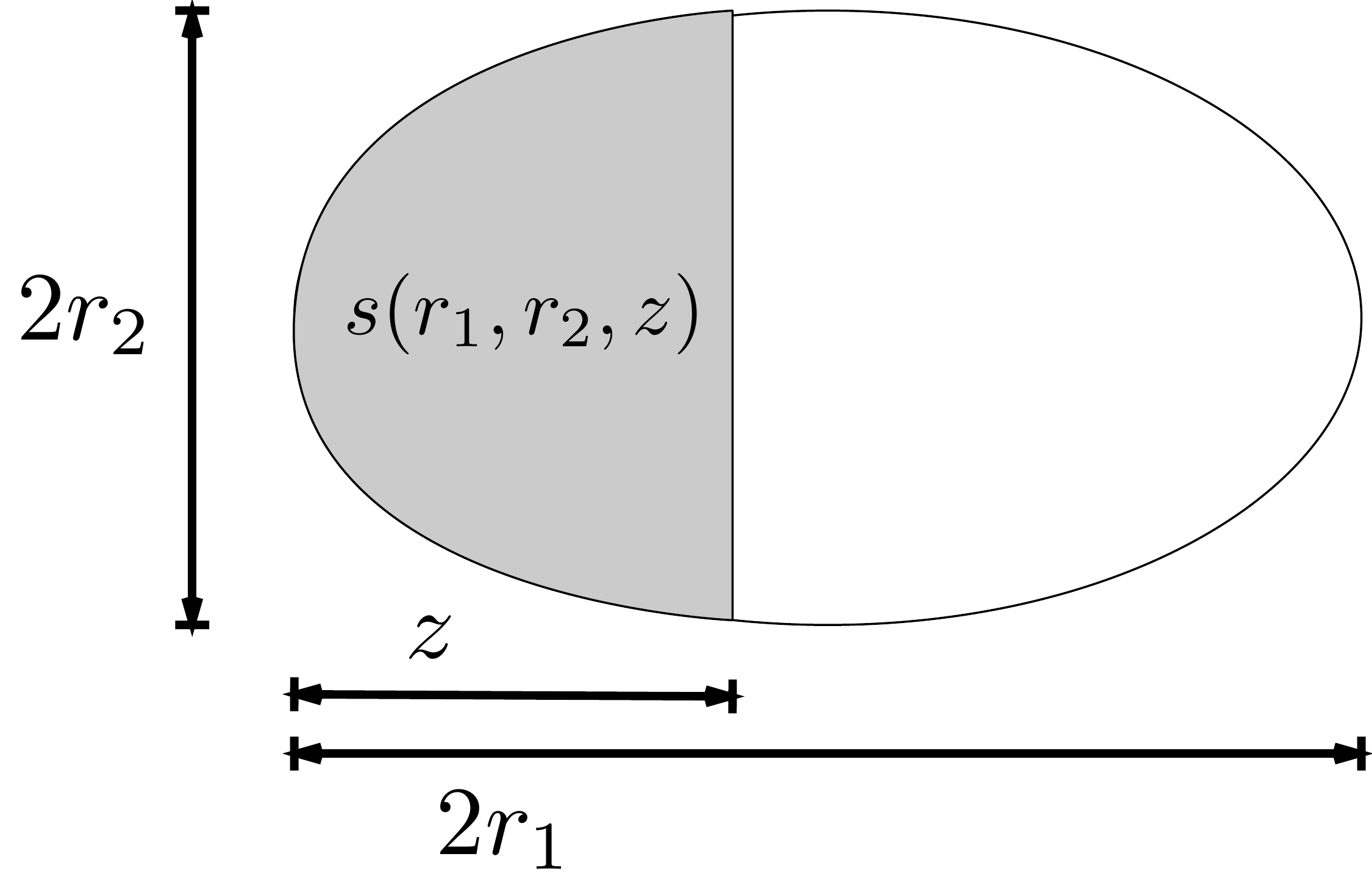}}
\caption{Area $s(r_1,r_2,z)$ of a segment of an ellipse \label{segment}}
\end{figure}

The main notations are shown in Fig~\ref{f:not}.
\begin{figure}[ht]
\centering
\includegraphics[width=.6\linewidth]{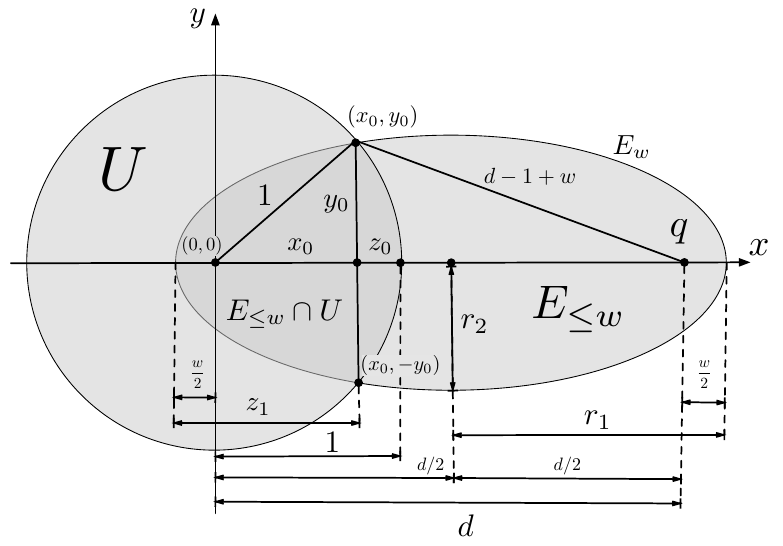}
\caption{Sender $v_0=(0,0)$, receiver $q=(d,0)$, intersections $(x_0,y_0), (x_0,-y_0)$, the center $(d/2,0)$ of the ellipse, the radii $r_1$ and $r_2$\label{f:not}}
\end{figure}
The  cutting depth $z_0$ of the circle and the cutting depth $z_1$ for the ellipse are
\begin{eqnarray*} 
z_0  & = & w\cdot \frac{2d-2+w}{2d}\\
z_1 &=&
 1- z_0  +w/2
= \frac{(2-w)(d+w)}{2d}\ .
 \end{eqnarray*}
The radii of the ellipse are given by
\begin{eqnarray*} 
r_1 & = & \frac{d+w}2\ , \\
r_2 & = & \frac12 \sqrt{(d+w)^2 - d^2}\  =\  \frac12  \sqrt{(2d+w)w}\ .
\end{eqnarray*}
The other parameters are 
\begin{eqnarray*} 
x_0 & = & 1- z_0(w,d) \ = \ 1- w\cdot \frac{2d-2+w}{2d} \\
y_0 & = & \sqrt{r^2 - x_0^2} \ , 
\end{eqnarray*}
which are the solutions to the equations
\begin{eqnarray*} 
x_0^2 + y_0^2& = & 1 \\
(d-x_0)^2 + y_0^2 & = & (d-1+w)^2 \\
r_1^2 + \frac14 d^2&=& \frac14(d+w)^2  \ .
\end{eqnarray*}

%
This implies for $f(w,d):= |E_{\leq w} \cap U|$ (time shift $w$ and receiver in distance $d$):
\begin{eqnarray*}
f(w,d) &=& s(1,1,z_0) + s(r_1,r_2, z_1)\\
&=&  g(z_0)+ r_1r_2 \ g\left( \frac{z_1}{r_1}\right)\\
&=& g\left( \frac{w(2d+w-2)}{2d}\right)+\frac14 (d+w) \sqrt{w(2d+w)}\ 
g\left( \frac{2-w}{d }\right)
\end{eqnarray*}\label{AI-proof}
Define the derivatives $f'$ and $f''$ with respect to the first parameter:
\begin{eqnarray}
	 f'(x,y)&  := &\quad\frac{\mathrm{d} f(x,y)}{\mathrm{d} \ x}  \label{eq:neuns} \\
	f''(x,y)& := &\quad\frac{\mathrm{d}^2 f(x,y)}{\mathrm{d}^2 \ x} \label{d:fss}
	\end{eqnarray}
The following observations of the derivatives are useful later on.
\begin{lemma}
For $x\in [0,2]$, $y\geq 1$:
\begin{eqnarray}
	 f'(x,y) & > & 0 \label{eq:neun} \\
	\sqrt{x} \ \ < \ \ f(x,y) & < &  \frac{7}3 \sqrt{x}  \label{eq:fff}
	\label{eq:zehn} \\
\hbox{For $y\geq 2$:}\quad 	\frac32 \sqrt{x} \ \ < \ \ f(x,y)  \label{eq:ffff}\\
\hbox{For $y>1$, $x\leq 2$:}\quad	\frac{f(x,y)}{f(x/2,y)} & > & \frac{7}{5} 
\label{eq:droelf} 
\end{eqnarray}
\label{la1}
\end{lemma}
\begin{proof}
Inequality~(\ref{eq:neun}) follows by the definition of $f(x,y)$. The other inequalities have been verified by computer generated function tables using interval arithmetics, see also the plots in Figs.~\ref{f1}-\ref{f3} to get an intuition.

\begin{figure}[ht]
\centerline{\includegraphics[width=.5\linewidth]{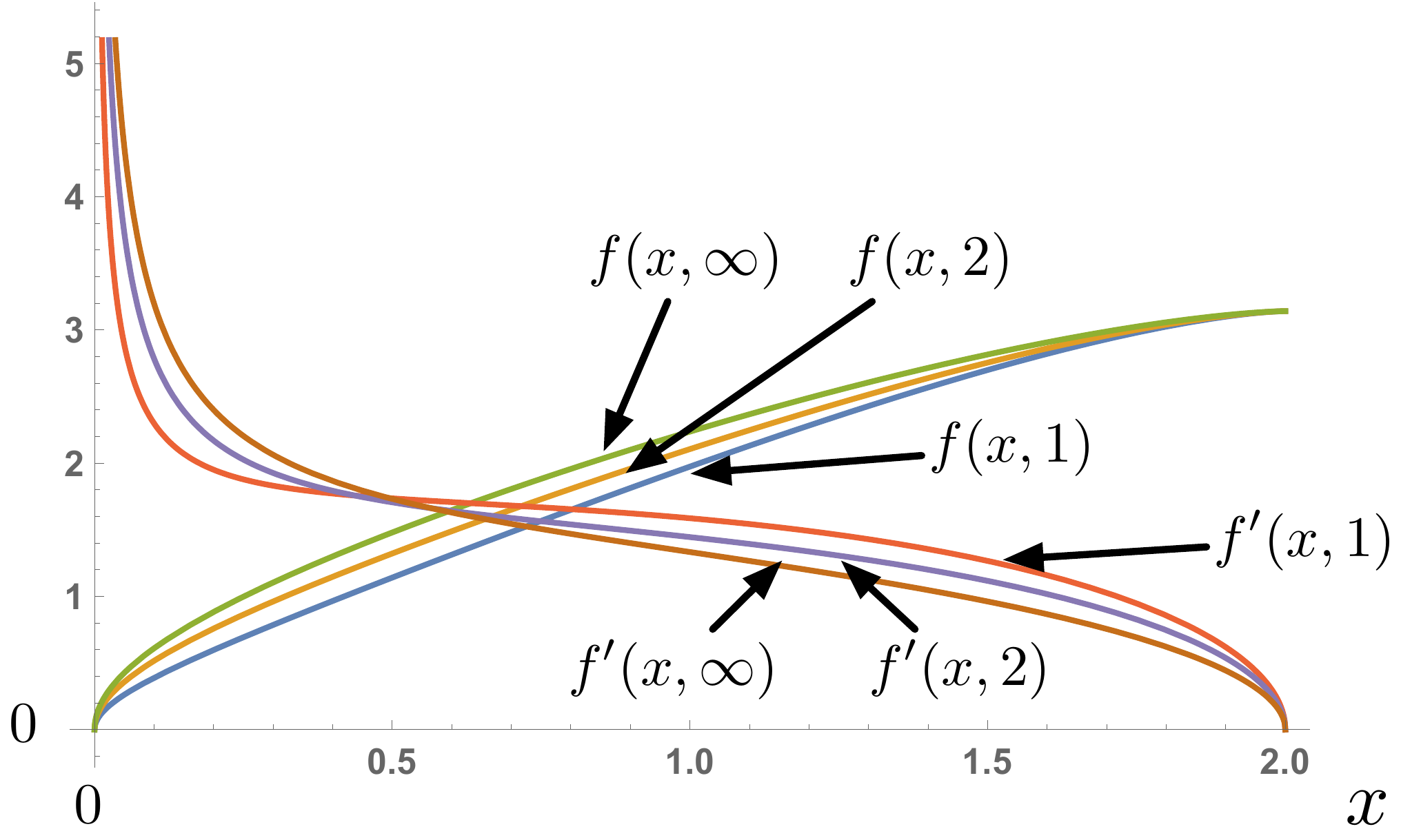}}
\caption{Plot of functions $f(x,y)$ and $f'(x,y)$ for exemplary values $y\in\{1,2,\infty\}$\label{f1}}
\end{figure}

In the automated proofs we only use rational numbers and only test if for two numbers $a<b$. The automated proofs check whether for some function $h$ the inequality 
$h(x) < c$ holds for $x\in [a,b]$, where $h(x)$ is finite in $[a,b]$. For this we divide the interval $[a,b]$ in smaller intervals $[a_0,a_1], [a_1,a_2], \ldots, [a_{n-1},b]$. Then, for an interval we calculate $f([a_j,a_{j+1}]) = [s_j,\ell_{j}]$ using interval arithmetics \cite{hickey2001interval}. So, it holds that $s_j \leq f(x) \leq \ell_{j+1}$ for all $x \in [a_j,a_{j+1}]$. Then, we check whether $\ell_j < c$ for all $i$. If this is not the case, we increase the number of intervals $n$ to have smaller intervals. By basic analysis it follows that a set of small enough intervals must exist that deliver a positive result if $h(x)<c$.

Analogously, we  generate proofs for two-dimensional cases to check that $f(x,y) < c$ holds for $x \in [a,b], y\in [a',b']$, where we use input intervals $[a_j,a_{j+1}] \times [a'_k, a'_{k+1}]$.

However, a direct application is not possible before we have not resolved the following two  problems: we have an open interval $y\in [1,\infty)$ and some comparisons involve functions on both side of the inequality. 
 
For the first problem we substitute $z=\frac1{y}$ and consider functions over the interval $z \in [0,1]$. At $z=0$ we have to show that the discontinuity is removable.  This way we prove Inequality~(\ref{eq:droelf}) noting that $\lim_{y\rightarrow \infty} \frac{f(2,y)}{f(1,y)} = \sqrt{2} \geq \frac75$.

For the second problem we divide by one side and get a removable discontinuity.
In order to remove the discontinuity at $x=0$ for $f(x,y)/\sqrt{x}$, we observe that
$$\lim_{y\rightarrow \infty} f(x,y) = \frac{1}{3} (x+1) \sqrt{(2-x) x}+\arccos (1-x)
$$
such that we get a removable discontinuity for $\frac{f(x,y)}{\sqrt{x}}$ for $y\rightarrow \infty$. Using the substitution $z=1/y$ the automatic selection of intervals for interval arithmetics produces the proofs for inequalites~(\ref{eq:fff}) and (\ref{eq:ffff}).
\end{proof}

\begin{figure}[ht]
\centerline{\includegraphics[width=.5\linewidth]{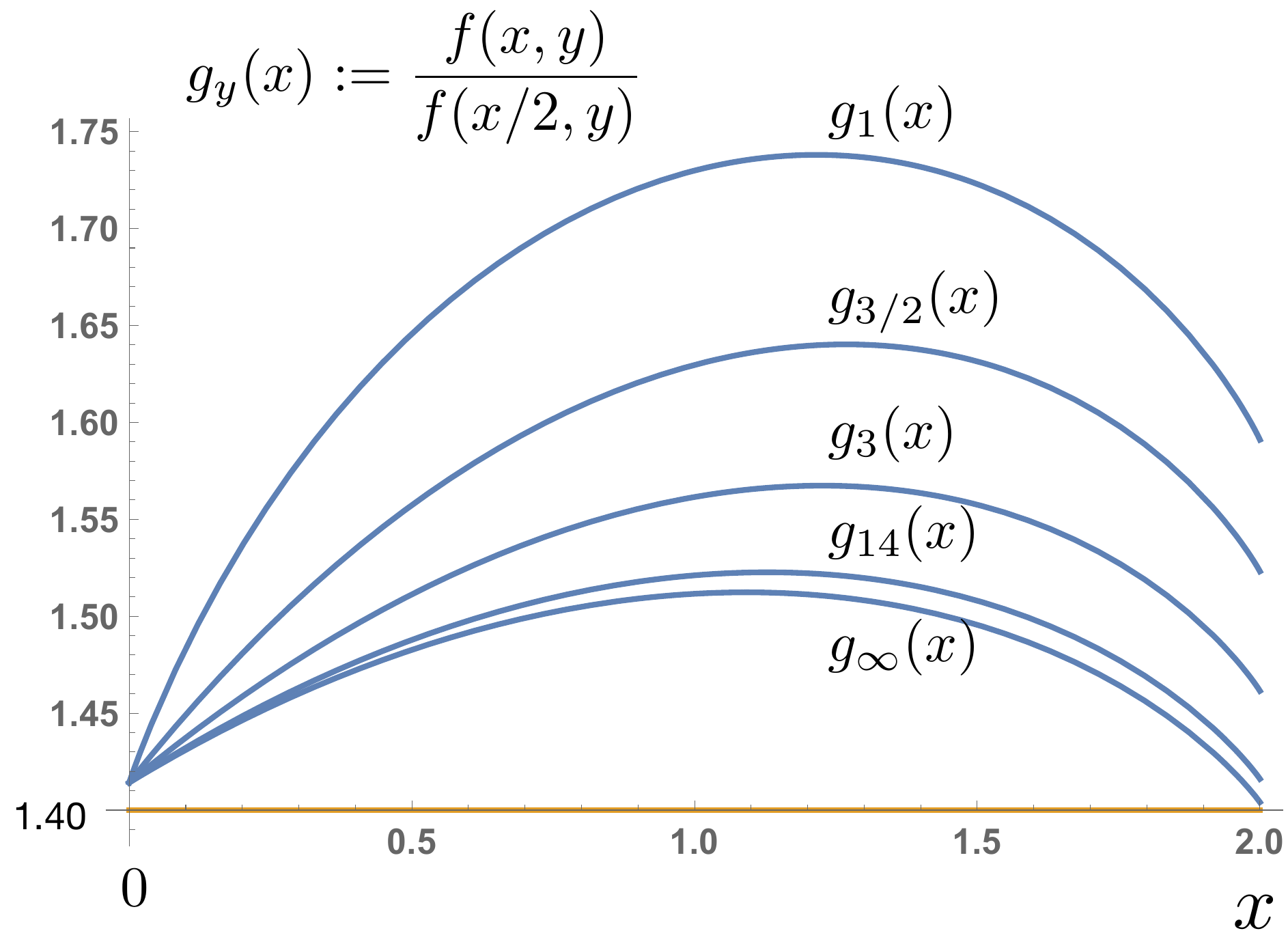}}
\caption{The function $f(x,y)/f(x/2,y)$ for  $y\in\{1,3/2,3,14,\infty\}$ relevant for Inequality~(\ref{eq:droelf}) \label{f4}}
\end{figure}

For (\ref{eq:neuns}), (\ref{eq:fff}), (\ref{eq:ffff})  $f$, $g$, the first and second derivatives of $f$ are the following. 
\begin{eqnarray*}
f(x,y) & = & g\left( \frac{x(x+2y-2)}{2y} \right)+  \frac14 (x+y) \sqrt{x(x+2y)}  \ g\left(\frac{2-x}{y}\right)  \\
\frac{\mathrm{d} \ g(x)}{\mathrm{d} \  x} & = & \sqrt{1-(1-x)^2} \\
f'(x,y)&:=& \frac{df(x,y)}{dy} \\&=&
 g\left(\frac{2-x}{y}\right) \frac{2 x^2+4 x y+y^2}{4 \sqrt{x (x+2 y)}} 
 +
 \ \frac{(x+y-2)\sqrt{x(2-x)(x+2 y-2)(x+2 y) }}{2 y^2} \\
 f''(x,y)&:=& \frac{d^2f(x,y)}{\mathrm{d}^2 \ y} =T_1(x,y)+ T_2(x,y) + T_3(x,y) \ ,
 \end{eqnarray*}
 where we have the following terms.
 \begin{eqnarray*}
 T_1(x,y) 
  & = & \frac{
  -3 x^4-2 x^3 (6 y-7)-2 x^2 \left(7 y^2-21 y+10\right)
  -4 x \left(y^3-8 y^2+10 y-2\right)+4 y \left(y^2-3 y+2 \right)
  }{2 y^2 \sqrt{(2-x) x (x+2 y-2) (x+2 y)}} \ \\[2ex]
  T_2(x,y)& = & -\frac{ \sqrt{(2-x) (x+2 y-2)} \left(2 x^2+4 x y+y^2\right)}{2 y^2 \sqrt{x (x+2 y)}}\\
T_3(x,y)& = &\frac{(x+y) \left(2 x^2+4 x y-y^2\right) }{4 (x (x+2 y))^{3/2}}  g\left(\frac{2-x}{y}\right)
\end{eqnarray*}

We can prove via interval arithmetics for $x \in [0,2], y\geq 1$ the following inequalities, see also Figures \ref{ff1}, \ref{ff2}, \ref{ff3}, \ref{ff4}.
\begin{lemma}
The following inequalities hold for $y\geq1$, $x \in [0,2]$:

\begin{eqnarray}
-2\  & \leq &\ \   T_1(x,y) \sqrt{x(2-x)} \nonumber  \\[-3.3ex]&&\hspace{25ex}\leq\   2 \label{T1}\\[1ex]
-3\  & \leq &\ \   T_2(x,y) \sqrt{x} \nonumber     \\[-3.3ex]&&\hspace{25ex}\leq\  0 \ \label{T2}\\[1ex]
-1 \   & \leq &\ \   T_3(x,y) x^{3/2}   \nonumber   \\[-3.3ex]&&\hspace{25ex}\leq\  1 \label{T3} \\[1ex]
1 \    &\leq &\quad  \frac{g(x)}{x^{3/2}}  \nonumber\\[-3.9ex]&&\hspace{25ex}\leq\  2  \label{g32} \ .
\end{eqnarray}

For  $x\in \left(0, \frac1{100}\right], y\geq 1$
\begin{equation}
 T_3(x,y) x^{3/2} \   \leq \  -\frac15 \ . \label{T33215}
 \end{equation}
where discontinuities at $x=0$ and $x=2$ are removed. 

For $x\in \left[\frac1{100}, 2-\frac1{100}\right]$, $y\geq 1$
\begin{equation} \label{fss18}
 f''(x,y)\  \leq\  -\frac18\ .
 \end{equation}
\end{lemma}
\begin{proof}
Possible discontinuities at $x=0$ and $x=2$ of the functions $T_1(x,y) \sqrt{x(2-x)}$, $T_2(x,y) \sqrt{x}$, and $T_3(x,y) x^{3/2}$ can  trivially be removed by algebraic transformation.
The function $\frac{g(x)}{x^{3/2}}$ is also finite with a removable discontinuity at $x=0$, since
 $$\lim_{x\rightarrow 0} \frac{g(x)}{x^{3/2}} = \frac43 \sqrt2 \ ,$$
 which can be seen by the Taylor series.
 For the automated proofs we again replace $z= \frac1y$ and prove the claims for the interval $z\in [0,1]$ with interval arithmetics.
 
 Since $f''$ is finite over this domain, we can apply an automated proof with interval arithmetics for $f''(x,1/z) < -\frac18$ for $x \in [1/100,1-1/100]$ , removing the discontinuity for $z=0$.
\end{proof}

\begin{figure}[ht]
\centerline{\includegraphics[width=.5\linewidth]{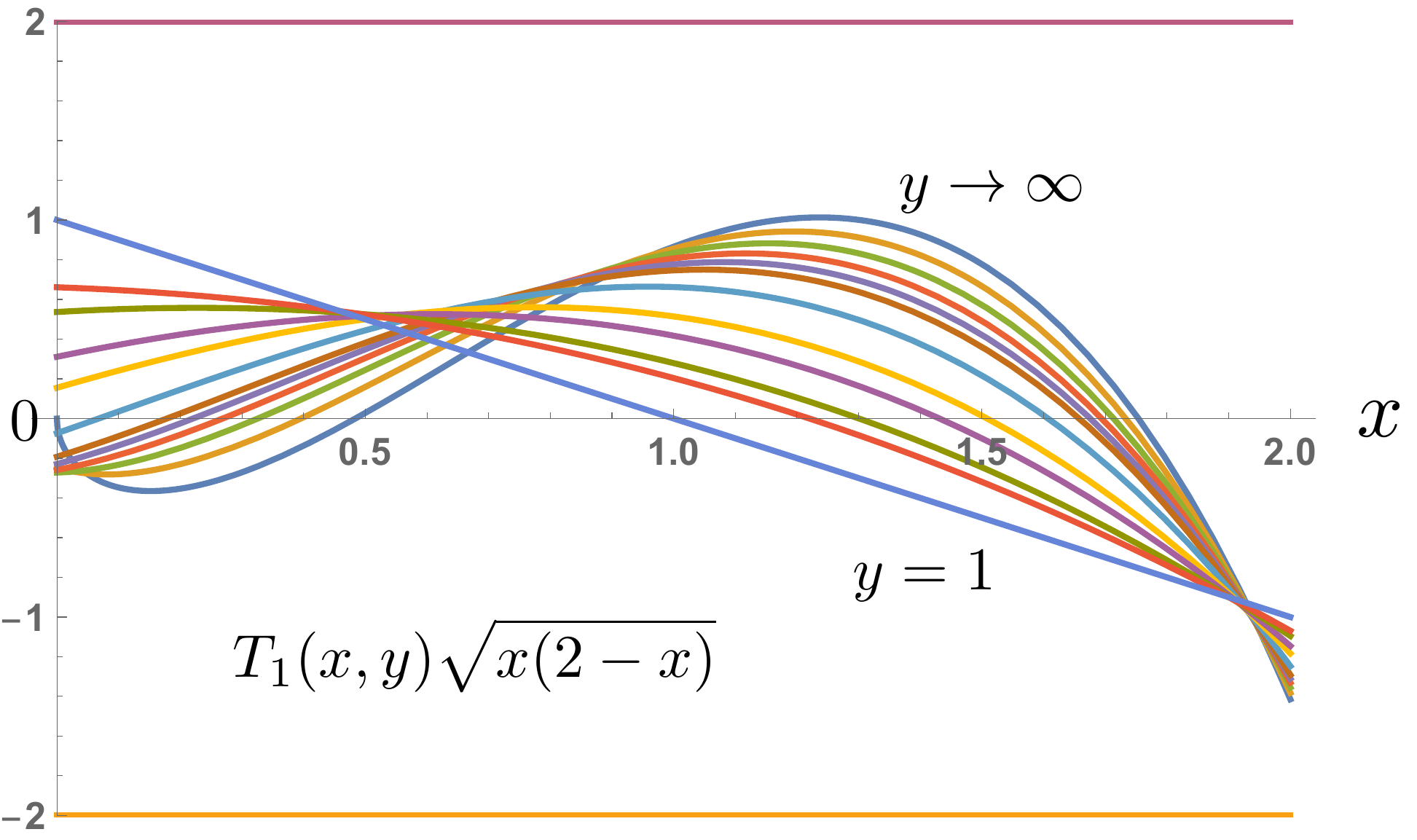}}
\caption{The function $T_1(x,y) \sqrt{x(2-x)}$ for 
$y\in \{1, 1.1, 1.2, 1.3, 1.4, 1.5, 1.8, 2.5, 3.2, 5, 7, \infty\}$, relevant for Inequality~(\ref{T1}) \label{ff1}}
\end{figure}

\begin{figure}[ht]
\centerline{\includegraphics[width=.5\linewidth]{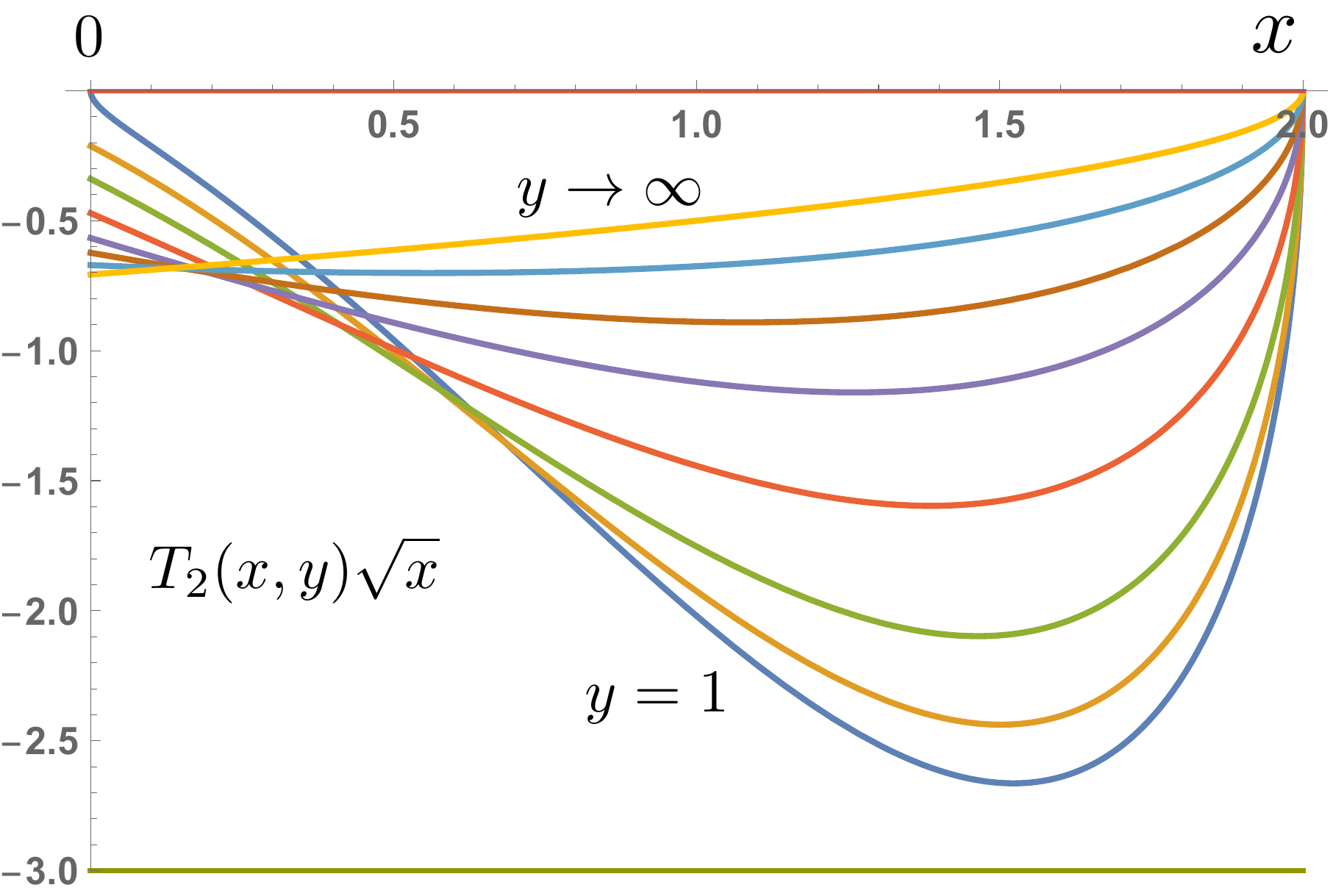}}
\caption{The function $T_2(x,y) \sqrt{x}$ for 
$y\in \{1, 1.1, 1.3, 1.8, 2.8, 4.5, 10,\infty\}$, relevant for Inequality~(\ref{T2}) \label{ff2}}\end{figure}
\begin{figure}[ht]
\centerline{\includegraphics[width=.5\linewidth]{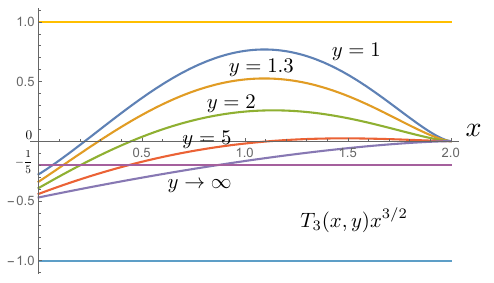}}
\caption{The function $T_3(x,y) x^{3/2}$ for 
$y\in \{1, 1.3, 2, 5,\infty\}$
, relevant for Inequality~(\ref{T3}) \label{ff3}}
\end{figure}

\begin{figure}[ht]
\centerline{\includegraphics[width=.5\linewidth]{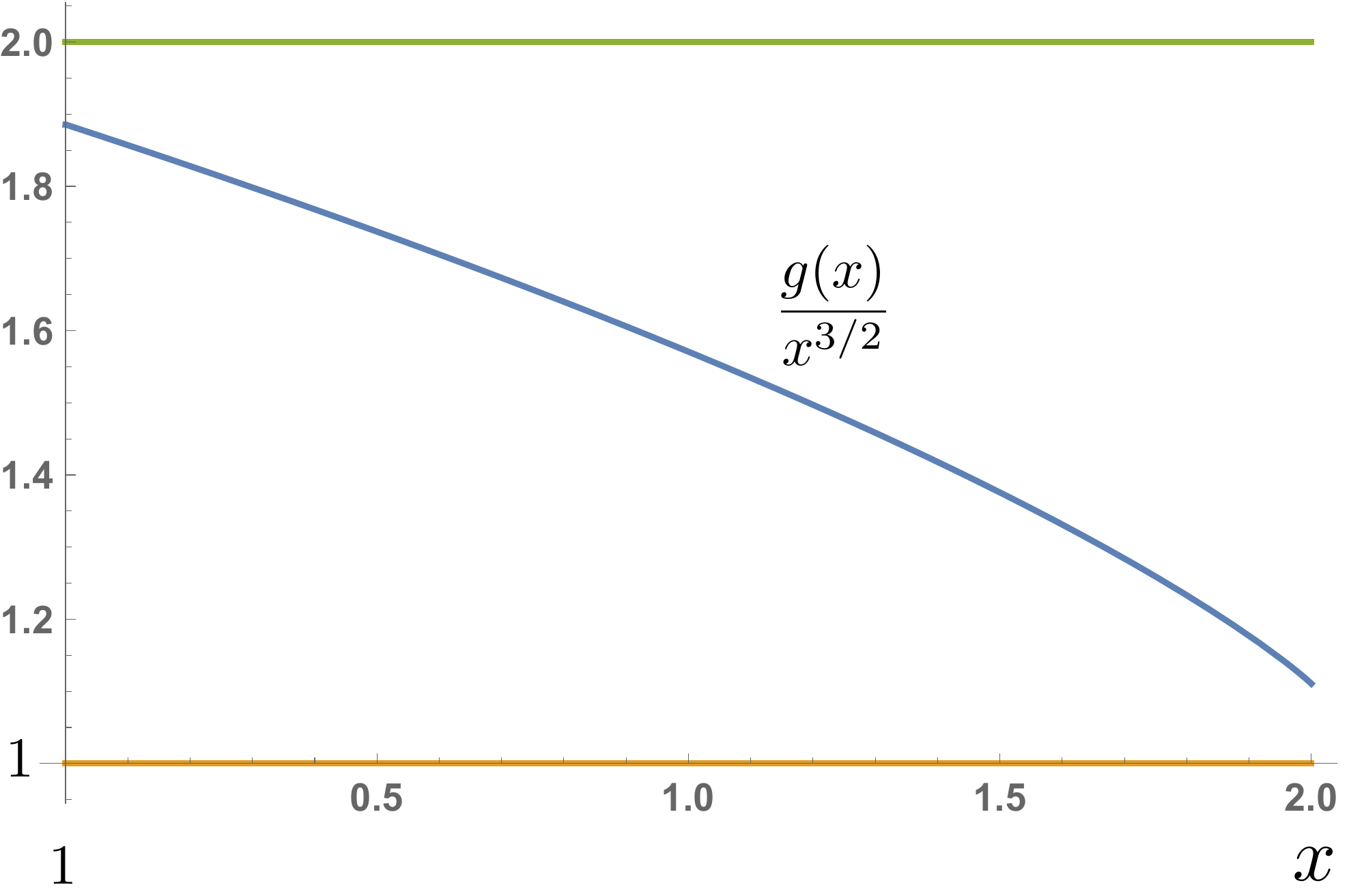}}
\caption{The function $g(x)/x^{3/2}$, relevant for Inequality~(\ref{g32}) \label{ff4}}
\end{figure}

 \begin{lemma} \label{l:fss}
For all $y>1$, $x\in (0,2)$
$$f''(x,y)<-\frac14\ .$$
For $x\in \left[ 2-\frac1{100},2\right)$, $y\geq 1$
\begin{equation}
 f''(x,y)\  \leq\  -199\ .
\end{equation}
\end{lemma}
\begin{proof}
The second derivative $f''(x)$ tends to $-\infty$ at the borders of $x\in [0,2]$. 
We have  proved (\ref{T33215}) via interval arithmetics for $x\leq \frac{1}{100}$:  
$x^{3/2} T_3(x,y)  \leq -1/5$. Then, it follows for $x \in [0, \frac1{100}]$:
\begin{figure}[ht]
\centerline{\includegraphics[width=.5\linewidth]{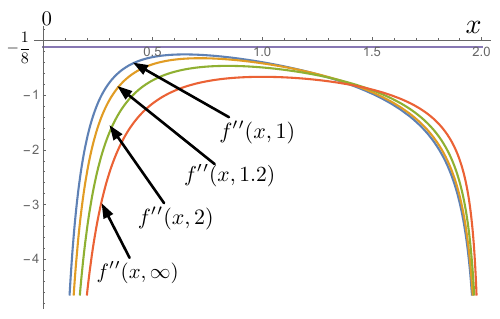}}
\caption{The function $f''(x,y)$ for  $y\in\{1,1.2,2,\infty\}$ of Lemma~\ref{l:fss} \label{f:fss}}
\end{figure}
\begin{eqnarray*}
 f''(x,y) & \leq  &  T_3(x,y)+ \frac{2}{\sqrt{x(2-x)} } \\
 & \leq  & -\frac{1}{5 x^{3/2}} + \frac{2\sqrt{x}}{\sqrt{(2-x)} } \\
 & =  & \frac{1}{\sqrt{x}} \left(-\frac{1}{5 x} + \frac{2 x}{\sqrt{(2-x)} }\right) \\
 & <  & -199
\end{eqnarray*}

For $x\geq 2-\frac1{100}$ we get
\begin{eqnarray*}
 f''(x,y) & \leq  &  T_1(x,y) + T_2(x,y) + T_3(x,y) \\
 & \leq  &  -\frac12 \sqrt{x(2-x)} + x^{3/2} \\
 & \leq  &  -\frac{\sqrt{100}}{2 \sqrt{2-\frac{1}{100}}} + (2-1/100)^{3/2} \\
 & < & -7/5
\end{eqnarray*}
\end{proof}
These lemmas imply:
\begin{lemma}\label{l:fsss}
For $x\in (0,2), y\geq 1$:
$$f''(x,y)\leq -\frac18\ .$$
\end{lemma}
 
\begin{figure}[ht]
\centerline{\includegraphics[width=.5\linewidth]{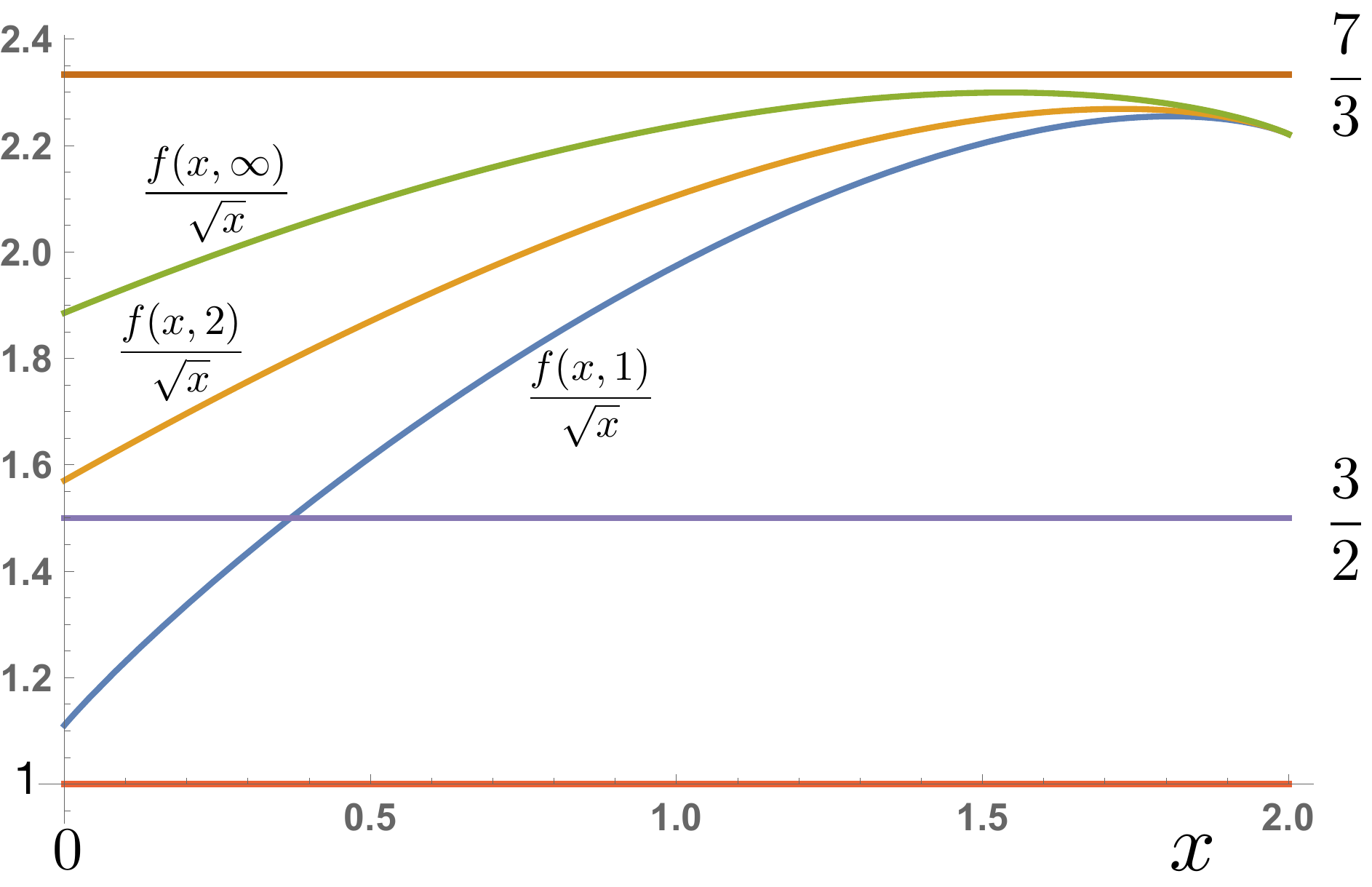}}
\caption{The function $f(x,y)/\sqrt{x}$ for  $y\in\{1,2,\infty\}$ relevant for Inequalites~(\ref{eq:fff}) and (\ref{eq:ffff}) \label{f3}}
\end{figure}

Using $f$ we can describe $h_{\infty}$ as follows.

\begin{lemma}For $w\in [0,2]$:
$$  h_{d, \lambda}(w) = \int_{x = 0}^{w} e^{i 2\pi x/ \lambda} f'(x,d)
  \ \mathrm{d}x\ .$$  
\end{lemma}
\begin{proof}
From the above considerations we know that the area of $E_{\leq w} \cap U$ is described by
$ f(w, d)\ .$
Hence, the differential area with points with phase $w$ is given by $$\frac{\mathrm{d}\ f(w,d)}{\mathrm{d}\ w} =  f'(w,d)\ .$$
\end{proof}

For the orientation of $h_{d, \lambda}(w)$ we get the following Lemma.
\begin{lemma} For all $w \geq 0$:  
$$ \arg(h_{d, \lambda}(w)) \in (0,\pi)\ .$$ 
\end{lemma}
\begin{proof}
We extend $f'(x,y)$ as $f'(x,y)= 0$ for $y\geq 2$.
We consider two consecutive intervals of distance $\lambda$:
\begin{eqnarray*}
h_{d, \lambda}(w)&=&\sum_{j=0}^{\lceil 2/\lambda \rceil} \int_{x = 0}^{\lambda} e^{i 2\pi x/ \lambda} f'\left(x+j \lambda,{d}\right)
  \ \mathrm{d}x\ \\
   &=& \sum_{j=0}^{\lceil 2/\lambda \rceil} \int_{x = 0}^{\lambda/2}e^{i 2\pi x/ \lambda}\left(f'\left({x+j \lambda},{d}\right) 
    f'\left({x+j \lambda + \frac12 \lambda},{d}
   \right)\right)
  \ \mathrm{d}x\ \\
    &=& \int_{x = 0}^{\lambda/2} e^{i 2\pi x/ \lambda} \sum_{j=0}^{\lceil 2/\lambda \rceil}  \left(f'\left({x+j \lambda},{d}\right)-
    f'\left({x+j \lambda + \frac12 \lambda},{d}
   \right)\right)
  \ \mathrm{d}x\ \\
\end{eqnarray*}  
Since $f''(x,y) \leq -\frac18$ we have
%
%
$$f'\left({x+j \lambda},{d}\right)-f'\left({x+j \lambda + \frac12 \lambda},{d}
   \right)\geq \frac1{16} \lambda$$
   
Because every difference is positive and it follows that $\arg(h_{d, \lambda}(w)) \in (0,\pi)$.
\end{proof}

From this consideration one can only imply that $\Im(h_{d, \lambda}(w)) = \Omega(\lambda)$  for $w\geq \lambda$. To achieve a better bound we concentrate on the first term of the sum.

\begin{proposition} For $x\leq \lambda/2$:
$$f'\left({x},{d}\right) \ \geq \ \frac{7}{5} f'\left({x + \frac12 \lambda},{d}
   \right)$$
\end{proposition} 
\begin{proof}
Since $x\leq \lambda/2$ we have $2x \leq x+\lambda$ and by  Lemma~\ref{la1} and the monotony of $f'$ it follows:
   $$f'({x},{d}) \ \geq \ \frac{7}{5} f'({2 x},{d})\ \geq \ \frac{7}{5}\  f'\left({x + \frac12 \lambda},{d}
   \right)\ .$$
\end{proof}
\begin{proposition} For $w \geq \lambda/2$:
$$ \Im(h_{d, \lambda}(w)) \geq  \int_{x = 0}^{\lambda} \sin(2\pi i\ x/ \lambda) \ f'\left({x},{d}\right)
  \ \mathrm{d}x\ .$$
\end{proposition} 
\begin{proof}
Because $f''(x,y)<0$ the other sum terms for $j>0$ may only increase the imaginary terms.
\end{proof}
\begin{proposition} 
$$ \int_{x = 0}^{\lambda/2} \sin(2\pi i\ x/ \lambda) \ f'\left({x},{d}\right)
  \ \mathrm{d}x\ \geq  
    \frac27
   \int_{x = 0}^{\lambda} \sin(2\pi i\ x/ \lambda) \ f'\left({x},{d}\right)
  \ \mathrm{d}x\ $$
\end{proposition}
\begin{proof} For $x\leq \lambda$ by the claim above we have:
$$f'\left({x},{d}\right) \geq \frac75 \ 
f'({x+\lambda/2},{d})$$
Hence 
$$f'(x,d) -f'\left(x+\lambda/2,d\right) \ \ \geq \ \ \frac27 \ f'(x,d) \ .$$
\end{proof}

\begin{proposition} For $d\geq 2$
$$ \int_{x = 0}^{\lambda/2} \sin(2\pi i\ x/ \lambda) \ f'\left({x},{d}\right)
  \ \mathrm{d}x\ \ \geq  \  \frac1{16\sqrt2} \sqrt{\lambda}
     $$ 
\end{proposition}
\begin{proof}
We use that $\frac32 \sqrt{x}< f(x,y) < \frac73 \sqrt{x}$ from (\ref{q73}) and that $f'(x,y)$ decreases with $x$ (Lemma~\ref{l:fsss}).
\begin{eqnarray*}
\int_{x = 0}^{\lambda/2} \sin(2\pi i\ x/ \lambda) \ f'\left(x,d\right)
  \ \mathrm{d}x\ 
   & \geq & 
  \int_{x = \lambda/4}^{\frac34 \lambda} \sin(2\pi i\ x/ \lambda) \ f'\left(x,d\right)
  \ \mathrm{d}x\ \\
  & \geq & 
  \int_{x = \lambda/4}^{\frac34 \lambda} \frac1{\sqrt2} \ f'\left(x,d\right)
  \ \mathrm{d}x\ \\
   & \geq & 
  \int_{x = \lambda/4}^{\frac34 \lambda} \frac1{\sqrt2} \ f'\left(x,d\right)
  \ \mathrm{d}x\ \\
   & \geq & 
    \frac{1}{2\sqrt2} \ \left( f\left({\frac34 \lambda},d\right)-
  f\left({\frac14 \lambda},d\right)\right)\\
  & \geq &  \frac{1}{2\sqrt2} \left(\frac32 \sqrt{\frac34 \lambda}- 
     \frac73 \sqrt{\frac14 \lambda}\right)\\
  & \geq &  \frac{\frac32\sqrt3-\frac73}{4\sqrt2}\ 
 \lambda^{1/2}  \label{q73} \\
  & \geq &  \frac{1}{16\sqrt2}\ 
 \lambda^{1/2}  \end{eqnarray*}
\end{proof}

This implies the following Lemma:
\begin{lemma} \label{lA10}For $w \geq \lambda/2$, $d \geq 2r$:
$$
 \Im(h_{d, \lambda}(w)))  \geq \frac{1}{56\sqrt2} \sqrt{\lambda} \ .$$
 \end{lemma}
 \begin{proof}
 Follows by the combining the above claims.
 \end{proof}
 
We now estimate for $d\geq c_f \lambda +1$:
$$ u_{d,\lambda}(w) :=\iint\displaylimits_{(x,y) \in E_{\leq w} \cap U} 
\frac{e^{2\pi
i \Delta_d(x,y)/\lambda}}{\sqrt{(x-d)^2+y^2}} \  \mathrm{d} x \ \mathrm{d} y\ ,$$
which is of interest since $u_{d,\lambda}(2) = s(d,\lambda,1)$.
Note that $\Delta_d(p):= |\!|p|\!|_2 + |\!|p-q|\!|_2- d\ .$ We change coordinates from $(x,y)$ to $(z, \ell)$, where $z$ is the $x$-coordinate of the Ellipse $E_{w}$ and $\ell$ is the distance from the point to the receiver. Note that
$$\left(
\begin{matrix}
x(\ell,z) \\
y(\ell,z)
\end{matrix}\right) = 
\left(
\begin{matrix}\displaystyle
\frac{2 d^2-2 d \ell+2 d z-2 \ell z+z^2}{2 d}\\[2ex]
\displaystyle
\frac{\sqrt{z} \sqrt{2 \ell-z} \sqrt{4 d^2-4 d \ell+4 d z-2 \ell z+z^2}}{2 d}\end{matrix}\right) 
$$

\begin{lemma} For $0 \leq w \leq 2$:
 \begin{eqnarray*}
 u_{d,\lambda}(w)
 &=& \int\displaylimits_{z=0}^{w} 
 \int\displaylimits_{\ell=d-1+z}^{d+z/2}  
  2\frac{e^{2\pi i z/\lambda}}{\ell} 
  |\det(\phi(\ell,z))| \mathrm{d}\ell \  \mathrm{d}z\  
\end{eqnarray*}
where for $\ell\geq z$:
$$|\det(\phi(\ell,z))| = \frac{2 \ell (d-\ell+z)}{\sqrt{z} \sqrt{2 \ell-z} \sqrt{(2 d+z) (2 d-2 \ell+z)}}$$
\end{lemma}
where $\phi(\ell,z)$ is the Jacobian of 
$\left(
\begin{matrix}
x(\ell,z) \\
y(\ell,z)
\end{matrix}\right)$.  

\begin{proposition}
$$
 f'(z,d) = 
\int\displaylimits_{\ell=d-r+z}^{d+z/2}   
 2 |\det(\phi(\ell,z))| \ \mathrm{d}\ell  $$
\end{proposition}

Define 
$$t(d,z, \ell) := \int\displaylimits_{\ell=d-1+z}^{d+z/2}  
  \frac{2}{\ell} 
  |\det(\phi(\ell,z))| \mathrm{d}\ell  \ .$$
\begin{lemma} \label{A13}
 For  $d > 1$ and $\ell \in [d-1,d+1]$ for all $w \in [0,2]$:
$$\frac{f'(z,d)}{d+1} \ \ \leq  \ \ t(d,z, \ell) 
 \ \ \leq \ \ 
\frac{f'(z,d)}{d+1} \left(1+ \frac{2}{d-1} \right)$$
\end{lemma}
\begin{proof}
Note that $0 < d-1 \leq \ell \leq d+1$. Hence,
$$ \frac{1}{d+1} \ \ \leq \ \ \frac{1}{\ell} \leq \frac{1}{d-1} = \frac1{d+1} 
\left(1+\frac{2}{d-1}\right)$$
\end{proof}

We choose $d \geq 15$ and get $\frac{2}{d-1} \leq \frac17$.

\begin{lemma} For $w \geq \lambda/2$, $d \geq 15$:
$$
 \Im(s(d,\lambda,1)) \geq \frac{1}{112\sqrt2 (d+1)}\sqrt{\lambda} \ .$$
\end{lemma}
\begin{proof}
We consider two consecutive intervals of distance $\lambda$:
\begin{eqnarray*}
u_{d,\lambda}(w)  &=&\sum_{j=0}^{\lceil 2/\lambda \rceil} 
 \int\displaylimits_{z = 0}^{\lambda}
 \int\displaylimits_{\ell=d-1+z+j \lambda}^{d+(z+j\lambda)/2}  
  2\frac{e^{2\pi i z/\lambda}}{\ell} 
  |\det(\phi(\ell,z))| \ \mathrm{d}\ell \ \mathrm{d}z\ 
   \\
&=&  \sum_{j=0}^{\lceil 2/\lambda \rceil} 
 \int\displaylimits_{z = 0}^{\lambda/2}\frac{e^{2\pi i z/\lambda}}{\ell}
 \left( t(d,z+j \lambda,\ell) -t(d,z+j\lambda+\lambda/2) \right)
 \  \mathrm{d}z
 \end{eqnarray*} 
Now for $z\le \lambda/2$, $d\geq 15$:
\begin{eqnarray}
t(d,z+\lambda/2, \ell(x,y)) &\leq& \frac87\frac{1}{d+1} 
f'(z+\lambda/2,d) \label{ineq89}\\
&\leq& \frac87 \cdot\frac57 \cdot \frac1{d+1}  f'(z/2 + \lambda/4,d)\nonumber \\
&\leq& \frac87 \cdot\frac57 \cdot \frac1{d+1}  f'(z,d) \label{eq5787}\\
&\leq&
\frac{40}{49} t(d,z, \ell(x,y)) \nonumber
\end{eqnarray} 
The first line (\ref{ineq89}) follows from Lemma~\ref{A13}.
Line~(\ref{eq5787}) follows from $z/2 + \lambda/4 \geq  z$ and that $f'$ is monotone decreasing because $f''(x,y)< -\frac18$ by Lemma~\ref{l:fsss}.

Hence $t(d,z, \ell(x,y)) - t(d,z+\lambda/2, \ell(x,y)) \geq \frac{9}{40} t(d,z, \ell(x,y)$ which we use in line~\ref{l940}:
\begin{eqnarray}
 \Im(s(d,\lambda,1))  &=&  \Im(u_{d,\lambda}(2))  \\
 &=&  \sum_{j=0}^{\lceil 2/\lambda \rceil} 
 \int\displaylimits_{z = 0}^{\lambda}
  \frac{\sin(2\pi i\ z/ \lambda)}{\ell}  t(d,z+j\lambda, \ell(x,y)) 
  \ \mathrm{d}\ell\ \ \mathrm{d}z \nonumber\\
  & \geq & 
  \sum_{j=0}^{\lceil 2/\lambda \rceil} 
 \int\displaylimits_{z = 0}^{\lambda/2}
 \frac{9}{40}  \frac{\sin(2\pi i\ z/ \lambda)}{\ell}  
 t(r,d,z+j\lambda, \ell(x,y)) 
  \ \mathrm{d}\ell\ \mathrm{d}z  \label{l940}\\
  & \geq & 
  \sum_{j=0}^{\lceil 2/\lambda \rceil} 
  \int\displaylimits_{z = 0}^{\lambda/2}
  \frac{9}{40} \frac{\sin(2\pi i\ z/ \lambda)}{d+1} f'(z+j \lambda,d) \ \mathrm{d}z  \nonumber\\
   & \geq &   \frac{9}{40} \frac{1}{d+1}
  \sum_{j=0}^{\lceil 2/\lambda \rceil} 
  \int\displaylimits_{z = 0}^{\lambda}
 \sin(2\pi i\ z/ \lambda) f'(z+j \lambda,d) \ \mathrm{d}z  \nonumber\\
  & = &  \frac{9}{40}\frac{1}{d+1}  \Im(h_{d,\lambda}(2))\label{hdef}\\
  & \geq & 
  \frac{1}{56\sqrt2}\frac{9}{40}\frac{1}{d+1} \sqrt{\lambda}   \nonumber\\
  & \geq & 
  \frac{9}{2,240\sqrt2}\frac{ \sqrt{\lambda}}{d+1} \nonumber
%
 \end{eqnarray} 
\end{proof}
In line~(\ref{hdef}) we use the definition of $h_{d,\lambda}(2)$ and then apply Lemma~\ref{lA10}.
\end{proof}

\begin{lemma}For $d>15r$:
$$\Exp{\Im[\hbox{\sl rx}]}  \geq 
 \frac{9}{4,480\pi\sqrt2} \frac{m\sqrt{\lambda}}{d\sqrt{r}}
\ .$$
\end{lemma}
\begin{proof}$
\Exp{\Im[\hbox{\sl rx}]}  \geq 
\frac{m}{2\pi r} 
\frac{9}{2,240\sqrt2} \frac{\sqrt{\lambda/r}}{d/r+1} \ \  \geq \ \ 
 \frac{9}{4,480\pi\sqrt2} \frac{m\sqrt{\lambda}}{d\sqrt{r}}
$.
\end{proof}

We apply the Hoeffding bound (Theorem 2 of \cite{hoeffding1963probability}), which states for $n$ independently distributed random variables $X_j\in \REAL$  strictly bounded by the intervals $[a_j, b_j]$:
\begin{equation*} \Prob{\left|\overline{X} - \Exp{\overline{X}}\right| \geq t} \leq 
2 \exp\left(- \frac{2 n^2 t^2}{\sum_{j=1}^n (b_j-a_j)^2} \right),\ 
\hbox{where $\overline{X} = \frac1n \sum_{j=1}^n X_j$.}\end{equation*}
We will use $X_j = \Im\left[
\frac{ e^{2 \pi
i \Delta_d(v_j)/{\lambda}} }{|\!|q-v_j|\!|_2}\right]
 \in [-\frac{1}{d-r},\frac{1}{d-r}] =: [a_j,b_j]$ denoting the signal produced by each of the $m$ non central senders. Furthermore, we set $t = \frac12 \Exp{\overline{X}}$. 
 Hence, $a_j-b_j=\frac{2}{d-r} \geq \frac{2}{d-\frac1{15} d}\geq \frac{30}{14 d} $.

 Note that $\Exp{\overline{X}} \geq c_{6} \sqrt{\frac{\lambda}{r}} \frac{1}d  \geq c_{6}\sqrt{\frac{\lambda}{r}} \frac{14}{30} (a-b) $,  where $c_{6}=\frac{9}{4,480\pi\sqrt2}$.
  Therefore 
 $$\frac{t^2}{(b-a)^2} = \frac{\frac14\Exp{\overline{X}}^2 }{(b-a)^2} \geq 
 c_{6}^2 \left(\frac{14}{30}\right)^2 \frac{\lambda}{r} $$
 This implies the following Lemma.

 \begin{lemma}For $d\geq 15r$ and  $c_4=\frac12 c_{6}$ and $c_{7} = 2c_{6}^2 \left(\frac{14}{30}\right)^2$
$$ \Prob{\Im[\hbox{\sl rx}] \leq c_4\frac{m\sqrt{\lambda}}{d\sqrt{r}}} \leq 
2 \exp\left(-  c_{7} \frac{\lambda m}{r}  \right)\ .$$
\end{lemma}

 This implies the following Lemma.

\begin{lemma}\label{lakjfd}
For $r \geq r_1$ and $d>15 r$:
\begin{equation}
\Prob{|rx|^2 \leq  \left(c_4 \frac{m\sqrt{\lambda}}{d\sqrt{r}}\right)^2} \leq \frac{1}{n^{c_5}}\ . \label{rxright}
\end{equation}
\end{lemma}
\begin{proof}
Note that $\Im(z) \geq a$ implies $|z|^2 = \Im(z)^2 + \Re(z)^2 \geq a^2$. It remains to show that 
$ \exp\left(-  c_{7} \frac{\lambda m}{r}\right) \leq \frac{1}{n^{c_5}}$. Using that $m \geq \frac12 \rho \pi r^2$ from Lemma~\ref{densifity} (\ref{dlower}) with probability  $1-e^{-\frac18 \rho \pi r^2}$.
So, the error probability is bounded for $n\geq 2 $ as $\frac{1}{n^{c_{5}}}$, where  $c_5 =\frac1{2 \ln 2} c_{2} c_{3}  c_{7} \pi  + 1$.
Using  $r\geq r_1 = c_{2}/\lambda$, $\rho \geq \frac{c_{3}}{\ln 2}\ln n$ we bound the error probability of Lemma~\ref{lakjfd} as follows.
\begin{eqnarray*} 
2 e^{-  c_{7}  \frac{\lambda m}{r} }
& \leq &  e^{ - \frac12 c_{7} \lambda   \rho \pi r + \ln 2}\\
& \leq &  e^{ - \frac12 c_{2} c_{7}   \rho \pi  + \ln 2}\\
& \leq &  e^{ - \frac1{2 \ln 2} c_{2} c_{3} c_{7}  \pi  \ln n   + \ln 2}\\
& \leq &  n^{ - \frac1{2 \ln 2} c_{2} c_{3} c_{7} \pi  + 1}\\
& \leq &  n^{ - c_5}\\
\end{eqnarray*}
\end{proof}
The right side inside the probability of (\ref{rxright}) is larger than the SNR threshold $\beta=1$ if
$d   \leq c_4 m r^{1/2}\lambda^{1/2} $. 
 This implies that all nodes in distance $d \leq  \frac14 c_4 \rho  r_{j}^{3/2}\lambda^{1/2} =c_1 \rho  r_{j}^{3/2}\lambda^{1/2} $ for $c_{1}=\frac14c_{4}$ can be informed in round $j$ with high probability. This completes the proof of Theorem~\ref{safety}.
%
\end{proof}
\begin{theorem} For constant wavelength $\lambda$
MISO broadcasting takes $ \mathcal{O}(\log \log n - \log \log \rho)$ rounds to broadcast the signal. 
\end{theorem}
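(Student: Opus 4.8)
The plan is to run Algorithm~\ref{alg:MISOB} and bound its round count by solving the radius recursion $r_{j+1} = c_1 \rho \lambda^{1/2} r_j^{3/2}$ furnished by Theorem~\ref{safety}, then check that this recursion reaches $R = \sqrt{n/(\pi\rho)}$ within the claimed number of rounds. First I would dispose of the start-up phase: for constant $\lambda$ the start radius $r_1 = c_2/\lambda$ is itself a constant, so informing all nodes in the disk of radius $15 r_1 = \mathcal{O}(1)$ can be accomplished by single-sender broadcasts, i.e. a UDG broadcast over a region of constant diameter, which by the coverage argument of Section~\ref{sec:UDG} costs only $\mathcal{O}(1)$ rounds when $\rho = \Omega(\log n)$. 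Thus the start-up adds only a constant and does not affect the asymptotics.

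For the main phase I would first verify that Theorem~\ref{safety} is applicable in every round, i.e. that the admissible window $[\,15 r_j,\; c_1\rho r_j^{3/2}\lambda^{1/2}\,]$ is nonempty, equivalently $r_{j+1} \geq 15 r_j$, which reduces to $c_1 \rho \lambda^{1/2} r_j^{1/2} \geq 15$. This can be checked by induction: it holds at $j=1$ whenever $\rho$ exceeds a constant depending on $c_1, c_2$ (guaranteed by $\rho = \Omega(\log n)$), and once it holds the radii increase by a factor of at least $15$, which only strengthens the inequality for $j+1$. Granting applicability, each round informs every node out to distance $r_{j+1}$ with high probability, and a union bound over the $\mathcal{O}(\log\log n)$ rounds and the $n$ nodes keeps the total failure probability at $n^{-\mathcal{O}(1)}$.

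The core computation is solving the recursion, for which the cleanest route is to normalize by its fixed point $\hat r := (c_1 \rho \lambda^{1/2})^{-2}$. Since $c_1 \rho \lambda^{1/2} = \hat r^{-1/2}$, the recursion collapses to $\tilde r_{j+1} = \tilde r_j^{3/2}$ for $\tilde r_j := r_j/\hat r$, giving the doubly-exponential closed form $\log \tilde r_j = (3/2)^{j-1}\log \tilde r_1$. For constant $\lambda$ one has $\log\tilde r_1 = \log r_1 - \log\hat r = \Theta(\log\rho)$ (using $r_1 = \Theta(1)$ and $\log\hat r = -2\log(c_1\rho\sqrt\lambda) = -\Theta(\log\rho)$, with $\rho$ large enough that $\tilde r_1 > 1$), while the target satisfies $\log(R/\hat r) = \tfrac12\log(n/(\pi\rho)) + 2\log(c_1\rho\sqrt\lambda) + O(1) = \tfrac12\log n + \tfrac32\log\rho + O(1) = \Theta(\log n)$, because $\rho = n/(\pi R^2)$ forces $\log\rho = O(\log n)$. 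Reaching the target thus requires $(3/2)^{j-1}\,\Theta(\log\rho) \geq \Theta(\log n)$, i.e. $(3/2)^{j-1} \geq \Theta(\log n/\log\rho)$, and taking logarithms twice yields $j = \mathcal{O}(\log\log n - \log\log\rho)$.

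I expect the main difficulty to be bookkeeping rather than a deep obstacle, since Theorem~\ref{safety} already carries the hard analytic work of estimating the superposed signal. The delicate points are verifying the per-round applicability condition $r_{j+1}\geq 15 r_j$ uniformly in $j$ (which constrains $\rho$ relative to the constants $c_1, c_2$), and confirming that the fixed-point shift $\log\hat r = -\Theta(\log\rho)$ contributes a $\Theta(\log\rho)$ starting value against a $\Theta(\log n)$ target, so that the number of $\tfrac32$-power steps is exactly $\Theta(\log\log n - \log\log\rho)$, matching the lower bound of Section~\ref{sec:MISO_Lower} and confirming tightness.
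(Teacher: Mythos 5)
Your overall route is the paper's own: a constant-round UDG start-up inside radius $15 r_1$, then the expanding-disk recursion $r_{j+1} = c_1 \rho \lambda^{1/2} r_j^{3/2}$ furnished by Theorem~\ref{safety}, solved to doubly-exponential growth and compared against $R=\sqrt{n/(\pi\rho)}$. Your fixed-point normalization $\tilde r_j = r_j/\hat r$ with $\hat r = (c_1\rho\lambda^{1/2})^{-2}$ is a cosmetic repackaging of the paper's unrolled closed form $r_j = r_1^{(3/2)^{j}}\left(c_1\rho\lambda^{1/2}\right)^{2(3/2)^{j}-2}$, and your applicability check $r_{j+1}\geq 15 r_j$ (equivalently $r_j \geq 225/(c_1^2\rho^2\lambda)$, which the paper verifies for large $n$ using $\rho = \Omega(\log n)$) and the union bound over rounds and receivers also match the paper's proof.

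There is, however, one genuine gap in your induction step. You claim that, granting applicability, ``each round informs every node out to distance $r_{j+1}$,'' but Theorem~\ref{safety} only triggers receivers at distances $d$ with $15 r_j \leq d \leq r_{j+1}$; it says nothing about the annulus $r_j < d < 15 r_j$. Your start-up covers this for round one (the disk of radius $15r_1$ is pre-informed, so after round one the full disk of radius $r_2$ is informed), but from round two onward the scheme as you state it leaves the annulus $(r_j, 15 r_j)$ uninformed, and since the trigger windows of later rounds move strictly outward, those nodes are never reached at all --- so the broadcast would not inform all $n$ nodes. The paper closes exactly this hole with its final remark: in each round the sender disk is kept a factor of $15$ smaller than the currently informed radius, so that the inner edge of the trigger window coincides with the informed boundary; this amounts to replacing $c_1$ by a smaller constant ($c_1/15$ in the paper's notation) in the recursion, which leaves the doubly-exponential growth and hence the $\mathcal{O}(\log\log n - \log\log\rho)$ round count unchanged. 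With that one repair --- which your own normalization argument absorbs without modification, since only the constant in $\hat r$ shifts --- your proof is complete and coincides with the paper's.
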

\begin{proof}
The algorithm works in two phases. In the first phase, we inform all nodes in radius $r_1=\mathcal{O}(1)$ using the UDG Broadcast algorithm with single senders. 
This takes at most $\mathcal{O}(1)$ rounds.

In the second phase we use the phase shift $\varphi_\ell = |\!|v_\ell-v_0|\!|_2$ for all senders $v_\ell$. Now the radii increase double exponentially with $r_{j+1} = c_1 \rho r_j^{3/2} \lambda^{\frac12}$.  Note that 
$r_{j+1} \geq 15 r_j$ if 
$  r_j \geq \frac{225}{c_1^2 \rho^2\lambda} \geq  \frac{225}{c_1^2 c_3^2 \lambda\log^2 n}\geq \frac{c_{2}}{\lambda} = r_1$  which holds for large enough number of nodes  $n$.

After   $j= \mathcal{O}(\log \log n - \log \log \rho)$ rounds we have reached 
%
   \begin{eqnarray*}
r_j &=& r_1^{(\frac32)^{j}} \left(c_1\rho\lambda^{\frac12}\right)^{1+\frac32 +
\ldots + (\frac32)^{j-1}} \\
    &=&  r_t^{(\frac32)^{j}} \left(c_1\rho\lambda^{\frac12}\right)^{2(\frac32)^{j}-2}\\
  &\geq & R  = \sqrt{\frac{n}{\pi \rho}}\, .
   \end{eqnarray*}
Note that in the first round $15 r_1$ nodes are already informed. So, all nodes in distance $r_2$ can be informed and the minimum distance of $15 r_1$ from the senders is kept. In every next round we decrease $r_j$ by a factor of $1/15$ to ensure that this minimum distance of $15 r_j$. The above recursion changes only to the extent that $c_1$ is replaced by $c'=c_1/15$ without changing the asymptotic behavior.
\end{proof}
Because of the super-exponential growth of $r_i$ the same observation as in Corollary~\ref{c:snrlight} can be made for the MISO broadcast.
\begin{corollary} \label{c:misolight}
The speed of the MIMO Broadcast approaches the speed of light for growing $n$.\end{corollary}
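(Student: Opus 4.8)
The plan is to reuse the argument behind Corollary~\ref{c:snrlight} almost verbatim, replacing the geometric growth of the SNR radii by the super-exponential growth $r_{j+1} = c_1 \rho r_j^{3/2}\lambda^{1/2}$ of the MIMO radii. The quantity to control is the total broadcast time $T$, which—ignoring the per-round decode/encode overhead, since the corollary concerns only the speed-of-light limit—is the sum over all rounds of the signal propagation delays. In each round the (freshly re-transmitted) coherent signal must travel from the central sender region out to the new boundary, so the delay of the round reaching radius $r'_j$ is at most $r'_j$, giving $T \le \sum_j r'_j$, exactly as in the SNR case.

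First I would index the rounds downward from the disk boundary, as was done for the SNR bound: set $r'_p := R = \sqrt{n/(\pi\rho)}$ for the final round $p$ and invert the recursion to obtain $r'_{j-1} = \left(r'_j/(c_1\rho\lambda^{1/2})\right)^{2/3}$. Read backward, each radius is essentially the $2/3$-power of its successor divided by the growing factor $c_1\rho\lambda^{1/2}$, so the sequence $r'_{p}, r'_{p-1}, r'_{p-2}, \ldots$ shrinks super-geometrically rather than merely geometrically.

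Next I would bound the tail $\sum_{j<p} r'_j$. Its largest term is $r'_{p-1} = \left(R/(c_1\rho\lambda^{1/2})\right)^{2/3} = \Theta\!\left(R^{2/3}\rho^{-2/3}\right)$ for constant $\lambda$, and since each term is $o(1)$ times its successor as $R\to\infty$, the whole tail is dominated by its largest term, i.e. $\mathcal{O}(R^{2/3}\rho^{-2/3})$. Dividing by $R$ gives $\mathcal{O}(R^{-1/3}\rho^{-2/3})$, which tends to $0$ because $R = \sqrt{n/(\pi\rho)} \to \infty$ for $\rho = \Omega(\log n)$. The initial UDG phase, which reaches only radius $r_1 = c_2/\lambda = \mathcal{O}(1)$ in $\mathcal{O}(1)$ rounds, contributes an additive $\mathcal{O}(1)$ to $T$ and is likewise negligible against $R$. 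Hence $T \le R\,(1 + o(1))$.

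Finally, since any signal must physically traverse the distance $R$ and we normalized the speed of light to $c=1$, the least possible time is $R$; combined with $T \le R(1+o(1))$ this yields $R/T \to 1 = c$ as $n \to \infty$, which is the claim. The one point that needs care—and which I expect to be the main obstacle—is justifying the per-round delay bound $r'_j$ with leading coefficient exactly one: one must argue that the coherent received signal at radius $r'_j$ is effectively formed at the direct-path arrival time from the sender disk, so that the final round alone does not already cost more than $R$. Given that (as in the SNR corollary) only the dominant in-phase contributors near $E_{\tau}$ need be counted while the antipodal contributions are negligible, this reduces to the same telescoping-free estimate used for Corollary~\ref{c:snrlight}, after which the super-geometric decay of the remaining terms finishes the proof.
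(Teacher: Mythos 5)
Your proposal is correct and takes essentially the same route as the paper, which establishes Corollary~\ref{c:misolight} merely by remarking that the super-exponential growth of the radii lets the telescoping bound $T \le \sum_j r'_j = R\,(1+o(1))$ from Corollary~\ref{c:snrlight} carry over; your backward recursion $r'_{j-1} = \left(r'_j/(c_1\rho\lambda^{1/2})\right)^{2/3}$ and tail estimate $\mathcal{O}\left(R^{2/3}\rho^{-2/3}\right)$ simply make that remark explicit. The coefficient-one concern you flag at the end is resolved more simply than via the $E_{\tau}$ ellipse argument: the sending disk in the round reaching radius $r'_j$ has radius $r'_{j-1} = o(r'_j)$, so the extra path length per round is of lower order, matching the paper's implicit treatment.
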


\section{Conclusions}
We have compared the number of rounds of collaborative broadcasting in three communication models. All of them are derived from the far-field superposition MISO/MIMO model where the signal-to-noise ratio allows a communication range of one unit. 

The first is the Unit-Disk-Graph model. Typically, parallel communication is seen here as a problem, resulting in the Radio Broadcasting model. For the Unit Disk Graph such interference results only in an extra overhead of a constant factor \cite{gandhi2008minimizing}. The delimiting factor is the diameter of the graph,  proportional to $\sqrt{n/\rho}$.


For the SNR-model one can achieve broadcasting in a logarithmic number of rounds which comes from the addition of the senders' signal energy. This allows to extend the disk of informed nodes by a factor of $\Theta(\sqrt{\rho})$, where $\rho$ is the sender density. This result is not surprising, since the area grows quadratically in the diameter and the path loss is to the power of two as well, which fits well to the law of conservation of energy.

For the MIMO model it is already known that beamforming increases the energy beyond the SNR model. It is possible to achieve logarithmic number of rounds for unicast on the line \cite{JS13_Beamforming_Line} and $\mathcal{O}(\log \log n)$ for the plane \cite{JS14_Beamforming_LogLog_TR}. The used beams are very narrow and for a growing number of sender nodes the ratio between beam range and angle decreases. So, the extended range results  from the focus of the energy on a smaller beam, while the signal energy is reduced elsewhere.

This leads to the question, how it is possible that a broadcast with high signal strength can take place. The answer is, that this panoramic beam draws its increased energy from a signal reduction from the three-dimensional space into the two-dimensional space, where the receivers happen to be located. 

\section{Outlook}
 
We have focused on broadcasting only a single sinusoidal signal and not a message consisting of many different signals. There, one faces inter-signal interference and inter-symbol interference. For inter-signal interference, note that the received signal has a constant phase shift. We bounded the interference of non-synchronized signals only by a constant factor with respect to the main signal. So, tighter bounds are necessary.
For the inter-symbol interference, a special encoding may reduce the interference caused by signals used for other parts of the message.

In \cite{JS14_Beamforming_LogLog_SSS} we have described a $\mathcal{O}(\log \log n)$ unicast algorithm without the need of perfect synchronization. There is some hope that a similar technique works here, too. Furthermore, the question remains open whether flooding in MISO is as efficient as the broadcasting algorithm relying on the Expanding Disk Algorithm. 

A drawback of the MISO broadcasting is that we assume the wavelength is constant, while the density grows logarithmic. So, the average distance of two nodes is smaller than the wavelength. Since, we use far-distant sending ranges this does not affect the validity of the claims. But a broadcasting time bound for constant density and very small wavelength seems desirable. For a constant density, we have to overcome the problem that the first sender might have no neighbors in the unit disk range. For this, on can focus only on situations where the broadcasting is possible in the UDG model. For small wavelength the solution seems to be more difficult. One approach might be to use the SNR broadcast for small ranges and then switch to MIMO broadcast. However, the SNR model equals the MISO/MIMO model for random phases only in the expectation and not with high probability. So, one has to analyze a SNR broadcast algorithm in the MISO/MIMO model where not all nodes might be triggered (a realistic problem known as Rayleigh fading). If this approach works out, then an additional number of $\mathcal{O}(\log (1/\lambda))$ steps are needed in MIMO broadcast, which is asymptotically better than the SNR broadcast if $1/\lambda$ is smaller than any polynomial.

Another interesting question concerns the influence of the path loss exponent $\alpha$, which we choose as $\alpha=2$. It has no influence to the UDG model. As an anonymous reviewer pointed out in the SNR model one expects for $\alpha<2$ a bound of $\mathcal{O}(\log \log n)$ for broadcasting, for $\alpha=2$ we have proved a bound of $\Theta(\log n)$ and for $\alpha>2$ a bound of $\mathcal{O}(n^{1/2})$ like in the UDG model is to be expected.
 
We conjecture for MIMO that our results can be generalized for $\alpha < 3$ because of area size of around $\Theta(r^{3/2} \lambda^{1/2})$ of nearly synchronous senders. For larger path loss the asymptotic number of rounds increase. For $\alpha=3$ we expect a logarithmic bound and for $\alpha>3$ the same behavior as in the Unit Disk Graph. 

Finally, the communication model is still very simple. Instead of a constant signal-to-noise ratio one might consider Gaussian noise. It is also unclear how obstacles influence the algorithm, or for which other path loss exponents, the double logarithmic number of rounds can be guaranteed.

\section{Acknowledgments}
We like to thank the organizers of the Dagstuhl Seminar 17271, July 2 - 7, 2017, Foundations of Wireless Networking, where this research has begun and first results have been found. 
We would like to thank Alexander Leibold, who performed and checked the automated proofs and anonymous reviewers of a previous version for their detailed and valuable input. We would also like to thank Tigrun Tonoyan, Magnus Halldorrson and Zvi Lotker for many fruitful discussions.

%
\bibliographystyle{unsrt}

\end{document}